\algnewcommand\algorithmicinput{\textbf{Input:}}
\algnewcommand\algorithmicoutput{\textbf{Output:}}
\algnewcommand\Input{\item[\algorithmicinput]}
\algnewcommand\Output{\item[\algorithmicoutput]}
\newcommand{\ALGtikzmarkcolor}{black}
\newcommand{\ALGtikzmarkextraindent}{4pt}
\newcommand{\ALGtikzmarkverticaloffsetstart}{-.5ex}
\newcommand{\ALGtikzmarkverticaloffsetend}{-.5ex}
\newcounter{ALG@tikzmark@tempcnta}
\newcommand\ALG@tikzmark@start{%
	\global\let\ALG@tikzmark@last\ALG@tikzmark@starttext%
	\expandafter\edef\csname ALG@tikzmark@\theALG@nested\endcsname{\theALG@tikzmark@tempcnta}%
	\tikzmark{ALG@tikzmark@start@\csname ALG@tikzmark@\theALG@nested\endcsname}%
	\addtocounter{ALG@tikzmark@tempcnta}{1}%
}
\def\ALG@tikzmark@starttext{start}
\newcommand\ALG@tikzmark@end{%
	\ifx\ALG@tikzmark@last\ALG@tikzmark@starttext
	\else
	\tikzmark{ALG@tikzmark@end@\csname ALG@tikzmark@\theALG@nested\endcsname}%
	\tikz[overlay,remember picture] \draw[\ALGtikzmarkcolor] let \p{S}=($(pic cs:ALG@tikzmark@start@\csname ALG@tikzmark@\theALG@nested\endcsname)+(\ALGtikzmarkextraindent,\ALGtikzmarkverticaloffsetstart)$), \p{E}=($(pic cs:ALG@tikzmark@end@\csname ALG@tikzmark@\theALG@nested\endcsname)+(\ALGtikzmarkextraindent,\ALGtikzmarkverticaloffsetend)$) in (\x{S},\y{S})--(\x{S},\y{E});%
	\fi
	\gdef\ALG@tikzmark@last{end}%
}
\newtheorem{observation}{Observation}
\newtheorem{theorem}{Theorem}
\newtheorem{proposition}{Proposition}
\apptocmd{\ALG@beginblock}{\ALG@tikzmark@start}{}{\errmessage{failed to patch}}
\pretocmd{\ALG@endblock}{\ALG@tikzmark@end}{}{\errmessage{failed to patch}}
\newcommand{\appref}[1]{{\hyperref[proof:#1]{\appsymb}}}
\newcommand{\appLink}[1]{{\hyperref[#1]{\appsymb}}}
\newcommand{\problemn}[1]{\textsc{#1}}
\newcommand{\problemdef}[3]{
	\begin{center}
		\begin{minipage}{0.95\textwidth}
			\noindent
			\problemn{#1}
			\vspace{5pt}\\
			\setlength{\tabcolsep}{3pt}
			\begin{tabularx}{\textwidth}{@{}lX@{}}
				\textbf{Input:}     & #2 \\
				\textbf{Question:}  & #3
			\end{tabularx}
		\end{minipage}
	\end{center}
}
\newcommand{\NN}{\mathbb{N}}
\newcommand*{\setcard}[1]{\left\lvert #1 \right\rvert}
\newcommand{\bigOh}{\mathcal{O}}
\newcommand*{\symmdiff}{\triangle}
\newcommand*{\enc}[1]{\left\langle #1\right\rangle}
\newcommand*{\sinks}{\operatorname{S}}
\newcommand*{\graphsize}[1]{\lVert #1\rVert}
\newcommand*{\gsubset}{\subseteq^S}
\newcommand*{\delegation}{delegation}
\newcommand*{\election}{election}
\newcommand*{\alternative}{alternative}
\newcommand*{\alternatives}{\alternative{}s}
\newcommand*{\Allv}{\textsc{All}}
\newcommand*{\allv}{\textsc{All}}
\newcommand*{\Onev}{\textsc{One}}
\newcommand*{\onev}{\textsc{One}}
\newcommand*{\windetproblemtext}{Winner Determination}
\newcommand*{\windetproblem}{\problemn{\windetproblemtext}}
\newcommand*{\windetgenerictext}{Majority/Plurality-One/All \windetproblemtext{}}
\newcommand*{\windetallmajtext}{Majority-All \windetproblemtext{}}
\newcommand*{\windetallmaj}{\problemn{\windetallmajtext}}
\newcommand*{\windetallpluralitytext}{Plurality-All \windetproblemtext{}}
\newcommand*{\windetallplurality}{\problemn{\windetallpluralitytext}}
\newcommand*{\windetonemajtext}{Majority-One \windetproblemtext{}}
\newcommand*{\windetonemaj}{\problemn{\windetonemajtext}}
\newcommand*{\windetonepluralitytext}{Plurality-One \windetproblemtext{}}
\newcommand*{\windetoneplurality}{\problemn{\windetonepluralitytext}}
\newcommand*{\electionbriberyproblemtext}{Election Bribery}
\newcommand*{\electionbriberyproblem}{\problemn{\electionbriberyproblemtext}}
\newcommand*{\delegationbriberyproblemtext}{Delegation Bribery}
\newcommand*{\delegationbriberyproblem}{\problemn{\delegationbriberyproblemtext}}
\newcommand*{\delegationbriberygenerictext}{Majority/Plurality \delegationbriberyproblemtext}
\newcommand*{\delegationbriberygeneric}{\problemn{\delegationbriberygenerictext}}
\newcommand*{\delegationbriberymajorityproblemtext}{Majority \delegationbriberyproblemtext{}}
\newcommand*{\delegationbriberymajorityproblem}{\problemn{\delegationbriberymajorityproblemtext}}
\newcommand*{\delegationbriberypluralityproblemtext}{Plurality \delegationbriberyproblemtext{}}
\newcommand*{\delegationbriberypluralityproblem}{\problemn{\delegationbriberypluralityproblemtext}}
\newcommand*{\revbfs}{\operatorname{rev-bfs}}
\newcommand*{\resrevbfs}{\operatorname{res-rev-bfs}}
\newcommand*{\vweight}{vp}
\crefname{observation}{Observation}{Observations}
\title{Who won?\\Winner Determination and Robustness in Liquid Democracy}
\author{Matthias Bentert} 
\author{Niclas Boehmer}
\author{Maciej Rymar}
\author{Henri Tannenberg}
\affil{\small
  Technische Universit\" at Berlin, Algorithmics and Computational Complexity\protect\\
  \{matthias.bentert,niclas.boehmer,m.rymar,h.tannenberg\}@tu-berlin.de}
\date{\today}
\begin{document}

\maketitle

\begin{abstract}
Liquid democracy is a decision-making paradigm in which each agent can either vote directly for some alternative  or (transitively) delegate its vote to another agent.
To mitigate the issue of delegation cycles or the concentration of power, delegating agents might be allowed to specify multiple delegation options. 
Then, a (cycle-free) delegation is selected in which each delegating agent has exactly one representative.
We study the winner determination problem for this setting, i.e., whether we can select a delegation such that a given alternative wins (or does not win). 
Moreover, we study the robustness of winning alternatives in two ways: 
First, we consider whether we can make a limited number of changes to the preferences cast by the agents such that a given alternative becomes a winner in one/in all delegations, and second, whether we can make a limited number of changes to a selected delegation to make a given alternative a winner.
	
\end{abstract} 

\section{Introduction}\label{sec:intro}
Liquid democracy is a paradigm that aims to change the  average citizen's connection to decision making.
In it, each agent is allowed to either vote directly for some alternative \emph{or} to delegate its vote to any other agent.
Such delegations are transitive---a delegated vote can be delegated again and again, until an agent voting directly for some alternative is reached.
Another important feature of liquid democracy is that participants can adjust their choices on a regular basis~\mbox{\cite{blum2016liquid,ford2002delegative,paulin2020overview}}.

One problem that commonly arises in liquid democracy is that of delegation cycles~\cite{brill2018interactive,golz2018fluid}. 
If a group of agents delegates in a cyclic fashion with none of them voting directly, then their votes cannot be counted and are lost.
One possibility to mitigate this issue, which has been previously suggested for instance by G{\"{o}}lz et al.~\cite{golz2018fluid}, is to allow each delegating agent to specify a set of agents they approve to receive their vote.\footnote{Recently, Brill et al.~\cite{brill2021liquid} studied a similar model where agents have additionally preferences over all delegating agents they approve.}
This can be modeled as a directed \emph{election graph}~$G$ where we have one vertex for each agent and an arc from an agent $a$ to an agent $b$ if $a$ approves $b$.
To model the opinions of directly voting agents, we subsequently extend the election graph by adding a vertex for each alternative and adding an arc from some agent $a$ to an alternative $c$ if $a$ directly votes for~$c$.
Subsequently, a cycle-free \emph{delegation} is selected in which each agent either votes directly, or delegates its vote to exactly one of its approved agents (a delegation is an acyclic subgraph of $G$ where the set of sinks is exactly the set of alternatives).
The selected delegation then ``induces'' an election where each directly voting agent $a$ does not only vote for itself but also for all agents that (transitively) delegated their vote to $a$.
Imagine now that some alternative wins in the election induced by the selected delegation but that there also exists a different delegation where a different alternative wins the induced election.\footnote{This happens, for instance, if there are two alternatives~$c$ and~$d$ and three agents $a_1$, $a_2$, and~$a_3$, where $a_1$ votes for~$c$, $a_2$ votes for $d$, and~$a_3$ approves both delegating to $a_1$ and $a_2$.}
Then, the proponents of the latter alternative may have reasonable grounds to complain: They can argue that their loss did not reflect ``the will of the people,'' but is due to the delegation selection algorithm, which happened to select a delegation that is unfavorable to them.
For the integrity and transparency of the process, it would therefore be desirable to recognize such cases.
Motivated by this, we study the \textsc{All [One] Winner Determination} problem, where we check whether a given alternative wins in \emph{all} [in at least \emph{one}] delegation in a given election graph.

But what should we do if there exist two delegations in an election graph producing different winning alternatives?
One answer to this is to assess which alternative's win is more robust or which of the two alternatives is closer to winning in all possible delegations (we measure closeness here in terms of how many arcs in the election graph need to be changed).
This motivates the study of the \textsc{All [One] Election Bribery} problem, where given an election graph, some alternative $c$ and an integer $k$, the question is whether we can insert or delete $k$ arcs in the election graph such that $c$ is a winner in \emph{all} [in at least \emph{one}] delegation.
Note that both problem variants are framed constructively, i.e., we want that some given alternative wins in certain cases. However, all our results will also hold for the destructive variants, where we want to prevent a certain alternative from winning in at least one/in all delegations.
Thus, to choose between multiple alternatives each winning in some delegation, we can compare the minimum cost of an All Election Bribery for the alternatives and pick the one with the smallest cost. Alternatively, we can compare their minimum cost of a destructive One Election Bribery and pick the one with the highest cost.
The name bribery here stems from the fact that we can alternatively interpret these problems as questions an ``evil'' briber faces when they decide which agents to bribe in order to change the situation in their favor. Notably, while from the robustness perspective fast algorithms are particularly desirable, from the bribery perspective computational hardness results are appealing, as they offer a worst-case computational barrier against optimal manipulation.

Importantly, \textsc{Election Bribery} is also relevant in the case where initially only one alternative wins in all delegations, as  it allows us to quantify the robustness of the win of this alternative by computing the minimum number of changes to change the winner of the election (we can compute this by applying the \textsc{Election Bribery} problem to all non-winning alternatives and then taking the minimum cost).  
This is also related to power accumulation in liquid democracy \cite{kling2015voting,golz2018fluid}.
An example of a very robust win is when all agents vote directly for some alternative~$c$ and an example of a very non-robust win is when all agents delegate their vote to a single agent which then directly votes for~$c$.
Detecting non-robust wins is extremely helpful, as in these cases one needs to be particularly careful regarding an external manipulation of the election and technical errors. 
In such cases, one might even want to further examine the election~issue and advise voters to reconsider their cast preferences.
On a related note, \textsc{Election Bribery} also offers a way to rank alternatives that do not win in any delegation by computing their distance from winning.

A different approach to tackle situations where there are  delegations in an election graph producing different winning alternatives is to examine the robustness of these delegations. 
We formalize the problem of computing the robustness of a delegation as the \textsc{Delegation Bribery} problem, where we are given a delegation and a budget $k$ and the question is whether there is a group of $k$ agents that can change the outcome together by voting directly for some given alternative.\footnote{Again this problem is framed constructively, yet our results also apply to the destructive variant.}
To increase the possibilities for direct participation and as one delegation can be possibly used for multiple election issues, after computing the delegation, it is quite natural to inform agents to which agent their vote is delegated and to allow them to vote directly on the issue instead \cite{LQblog}. 
That delegating agents are unhappy with the agent receiving its vote in one poll can easily happen, as it is common for people to broadly agree with another person on a wide range of political issues, but have different opinions on some highly controversial subjects. 
Moreover, the \textsc{Delegation Bribery} problem is also relevant in the context of other liquid democracy models where, for instance, each delegating agent is asked to specify only one delegation option. 

\paragraph{Related Work.}
While the basic ideas of liquid democracy can be traced back all the way to the 1800s~\cite{carroll1884principles}, the start of the modern study of liquid democracy is often attributed to Ford \cite{ford2002delegative}.
For a more comprehensive review of the history of liquid democracy, we refer to the work of Behrens~\cite{behrens2017origins}.
Currently, liquid democracy is a very active area of research~\cite{brill2021liquid,d2021computation,golz2018fluid,kotsialou2018incentivising,paulin2020overview,zhang2021power} (see also  \cite{behrens2017origins,brill2018interactive,ford2020liquid,paulin2020overview} for surveys). 
While many different liquid democracy models have been proposed, our model closely follows that of
G\"{o}lz et al.\ \cite{golz2018fluid}.
The main difference between our model and the model of G\"{o}lz et al.\ \cite{golz2018fluid} is that we add information about the alternatives the directly voting agents vote for.
Despite this difference, we strengthen one of G\"{o}lz et al.'s hardness results in \cref{subsec:windetexists}.

Concerning the computational problems studied in this paper, to the best of our knowledge, the \windetproblem{} problem has neither been studied in our nor any other model of liquid democracy. 
However, the broader problem of finding delegations with ``good'' properties has attracted a significant amount of research \cite{brill2021liquid,brill2018pairwise,golz2018fluid,kotsialou2018incentivising}.
For our bribery problems, there is one previous work by D'Angelo et al. \cite{d2021computation} analyzing bribery related problems in the context of liquid democracy using a model similar to ours.
In particular, one of their problems is similar to our \delegationbriberyproblem{} problem.
While we allow delegating agents to change their mind and directly vote for an alternative, they only allow delegating agents to switch from one initially approved agent to another initially approved agent when redirecting the delegation of their vote.
D'Angelo et al. \cite[Theorem 5.6]{d2021computation} showed that their version of the problem is NP-hard.
In contrast, we show our version to be polynomial-time solvable.
More generally speaking, bribery and its connection to robustness has been extensively studied in the broader context of computational social choice \cite{DBLP:conf/aldt/BaumeisterER11,DBLP:journals/jair/BoehmerBHN21,DBLP:conf/ijcai/BoehmerBKL20,faliszewski2016control}. 

\paragraph{Our Contribution.}
We study the complexity of \textsc{Winner Determination}, \textsc{Election Bribery}, and \textsc{Delegation Bribery} as informally introduced above and formally introduced in the respective sections.
We consider all problems for two different voting rules used for evaluating the election induced by some delegation: Under the plurality voting rule, an alternative is a winner if it receives the highest number of votes, while under the majority voting rule an alternative is a winner if it receives at least half of the votes.
While Plurality is arguably the most popular voting rule in the political realm, our reason for considering the majority voting rule is two-fold: 
First, it is in some sense simpler than the Plurality voting rule (allowing for additional tractability results). 
Second, for certain (political) decisions an absolute majority is needed, so in these cases the majority voting rule is used.
\cref{tab:allresults} gives an overview over our results.
\begin{table}[t]
	\caption{
		Overview of our results. ``P'' stands for polynomial-time solvable and ``NP-h.'' stands for NP-hardness. All our algorithms run in quadratic or even subquadratic time. All hardness results hold even if the election graph is a directed acyclic graph with maximum outdegree two.
	}
	\label{tab:allresults}
	\centering
	\begin{tabular}{c|cc}
		& \textsc{Majority} & \textsc{Plurality} \\
		\hline
		\allv{} \windetproblem{} & P (Pr. \ref{thm:windetallpluralityeasy}) & P(Pr. \ref{thm:windetallpluralityeasy}) \\
		\onev{} \windetproblem{} & P (Pr. \ref{thm:majoritywinnereasy})& NP-h. for three alternatives (Th. \ref{thm:windetexpluralityhard}) \\
		\hline
		\allv{} \electionbriberyproblem{} & \multicolumn{2}{c}{NP-h. and W[1]-h. wrt. $k$ for two alternatives (Th. \ref{thm:edgeallhard})} \\
		\onev{} \electionbriberyproblem{} & \multicolumn{2}{c}{NP-h. and W[1]-h. wrt. $k$ for two alternatives (Th. \ref{thm:edgeonehard})} \\
		\hline
		\delegationbriberyproblem{} & P (Pr. \ref{thm:delegbribemajeasy}) & P (Th. \ref{th:delegation-plur})
	\end{tabular}
\end{table}%
Before we describe our results in more detail, we want to remark that all our hardness results hold in very restrictive settings. 
In particular, we show hardness even if the election graph is acyclic and each vertex has maximum outdegree two. 
Both of these properties are highly relevant in practice, as an acyclic election graph can, for instance, arise if agents have competence levels and agents only delegate to more qualified agents \cite{kahng2021liquid}. 
The maximum outdegree, on the other hand, describes the maximum number of agents some agent approves as possible delegators, a number which is presumably a very small constant in practice. 

Coming to our results, we study the \textsc{Winner Determination} problem in \Cref{sec:windet}. 
We show that \textsc{Plurality-One Winner Determination} is already NP-hard for three alternatives, whereas the three other problem variants are polynomial-time solvable.
This implies that under the plurality rule, we presumably cannot efficiently check which alternatives can be a winner; however, we can check whether a winning alternative is the only possible winner, a practically already quite relevant question. 
In  \Cref{sec:elecbribe}, we study the \textsc{Election Bribery} problem. 
We show that all four problem variants are already NP-hard for only two alternatives. 
Moreover, we strengthen these hardness results by proving  that the problems are even W[1]-hard with respect to the given budget $k$. 
This implies that under standard complexity theoretical assumptions these problems do not admit a fixed-parameter tractable algorithm with respect to $k$, i.e., an algorithm running in $f(k)\lvert \mathcal{I} \rvert^{\mathcal{O}(1)}$ for some computable function~$f$.
Lastly, we provide efficient polynomial-time algorithms for \textsc{Plurality/Majority Delegation Bribery} in \Cref{sec:delegbribe}.

\section{Preliminaries}\label{sec:prelims}
For two sets $S$ and $S'$, we denote the symmetric difference between~$S$ and~$S'$ by~$S\triangle S'$.
\paragraph*{Graph theory.}
Let~$G = (V, A)$ be a directed graph.
We define~$\graphsize{G}\coloneqq \setcard{V} + \setcard{A}$ to be the size of~$G$.
Furthermore, we denote by~$\sinks(G)\subseteq V$ the set of sinks in~$G$, that is,~${\sinks(G)\coloneqq\{s\in V\mid \nexists v\in V: (s, v)\in A\}}$.
For a graph~$G' = (V', A')$, we write~$G'\gsubset G$ if~$V=V'$,~$A'\subseteq A$, and $S(G)=S(G')$.
A directed graph~$G=(V,A)$ is called a \emph{directed acyclic graph (DAG)} if $G$ does not contain any cycles (a \emph{cycle} is a sequence of vertices $v_1,\dots,v_p$ for some $p\geq 2$ such that $(v_i,v_{i+1})\in A$ for all $i\in [p-1]$ and $(v_p,v_1)\in A$). 
A \emph{path} in $G$ is a sequence of distinct vertices $v_1,\dots,v_p$ such that $(v_i,v_{i+1})\in A$ for all $i\in [p-1]$; the length of such a path is $p-1$. 
The \emph{depth} of a DAG is the maximum length of a path in $G$.

\paragraph*{Reachability sets.}
For a vertex~$v\in V$, let~$\revbfs_G(v)$ be the set of vertices that can \emph{reach}~$v$ in~$G$, that is, $$\revbfs_G(v)\coloneqq \{u\in V \mid \text{there is a $u$-$v$-path in $G$} \}.$$
Observe that the set~$\revbfs_G(v)$ is easy to compute in linear time by a breadth-first-search on~$G$ with all arcs reversed.
Similarly, for a sink~$s\in\sinks(G)$, we define $\resrevbfs_G(s)$ to be the set of vertices that can reach~$s$, but no other sink~$s'\neq s$, i.e., $$\resrevbfs_G(s)\coloneqq\{u\in\revbfs_G(s)\mid u\notin \revbfs_G(s') \text{ for all } s'\in\sinks(G)\setminus \{s\}\}.$$ 
This set can also be computed in linear time by a slightly modified two-phase reverse-breadth-first-search, where in the first phase we start from all vertices in~$\sinks(G)\setminus\{s\}$ and mark all vertices that can reach a sink~$s'\neq s$ and in the second phase we start from~$s$ and only visit unmarked vertices.

\paragraph*{Election Graphs.}
In our model of liquid democracy, we have a set $X$ of alternatives and a set~$Y$ of agents. 
Each agent $y\in Y$ either specifies an alternative $x_y\in X$ for which the agent wants to vote directly or a subset $Y_y\subseteq Y$ of delegation options, containing all agents which~$y$~deems qualified to receive their vote.  
One can model this input as a graph where we have one vertex $v_y$ for each agent~$y\in Y$ and one vertex $u_x$ for each alternative $x\in X$.
For each agent~$y\in Y$ voting directly for alternative $x_y$ we add an arc from $v_y$ to $u_{x_y}$. 
Moreover, for each agent~$y\in Y$ who wants to delegate its vote to some agent from~$Y_y$, we add an arc from~$v_y$ to~$v_{y'}$ for each~$y'\in Y_y$. 
Thus, in the constructed graph the sink vertices correspond to alternatives and the non-sink vertices to agents. 
To prevent the unavoidable selection of delegation cycles, we assume that in the constructed graph each $v_y$ for some $y\in Y$ has a path to some sink, as otherwise the vote of of $v_y$ could never be delegated to an alternative and thus a delegation cycle necessarily arises.\footnote{Concerning this assumption, first note that there exist multiple scenarios where the election graph is cycle-free by design, e.g., when agents have qualifications and are only allowed to delegate to an agent with a higher qualification. In fact, all our hardness result also cover this scenario. Second, in case there exist agents without a path to a sink, their vote cannot be assigned to any alternative. Thus, their vote is simply lost and we can delete the agent from the graph.}
Formally, we work on so-called election graphs: A directed graph~$G = (V, A)$ is an \emph{\election{} (graph)} if every vertex~$v\in V$ has a path to at least one sink~$s\in\sinks(G)$. We call the vertices in~$\sinks(G)$ \emph{alternatives} and the remaining ones \emph{agents}.\footnote{Election graphs are a slightly more general class of graphs as the graph class that results from the above described scenarios in the absence of delegation cycles, as election graphs allow agents to have outgoing arcs to both (possibly multiple) alternatives and arcs to other agents; however, whether we allow for this will be irrelevant in our subsequent computational analysis, so we use the more general model.}

\paragraph*{Delegation Graphs.}
Given an election graph $G = (V, A)$, a final \emph{delegation} needs to be chosen, i.e., for each agent either a single alternative the agent directly votes for or a single agent the agent delegates their vote to needs to be specified:
We call a graph~${G' = (V', A')}$ a \emph{delegation graph} if $G'$ does not contain any cycles and each vertex has outdegree at most one.
We say that a delegation graph~${G' = (V', A')}$ is a delegation subgraph of~$G$ if~$G'\gsubset G$.  
Observe that as we require that in an \election{} graph every vertex has a path to a sink, a \delegation{} subgraph exists in every \election{} graph.
In a delegation graph~$G'=(V',A')$, each agent $v\in V'\setminus S(G')$ has a certain voting power~$\vweight_{G'}(v)$, i.e., its own vote and the number of votes that are delegated to it.
This is defined as $$\vweight_{G'}(v)\coloneqq \setcard{\revbfs_{G'}(v)}.$$

The number of votes an alternatives $s\in S(G')$ receives in delegation~$G'$ is $$\vweight_{G'}(s)\coloneqq \setcard{\revbfs_{G'}(s)\setminus\sinks(G')}.$$
Note that since the vertices in~$\sinks(G')$ are artificial vertices representing alternatives, they do not contribute a vote themselves.

In a delegation graph $G'=(V',A')$, we say that an agent $v\in V'\setminus S(G')$ \emph{directly votes} for some alternative $s\in S(G')$ if~$(v,s)\in A'$; otherwise, we say that $v$ \emph{delegates} to $w\in V'\setminus S(G')$ if~$(v,w)\in A'$.
Moreover, we say that an agent $v\in V'\setminus S(G')$ \emph{votes} for alternative $s\in S(G')$ in~$G'$ if there is a path from $v$ to $s$ in $G'$, i.e., the vote of $v$ is (transitively) delegated to an agent that votes directly for $s$. 

\paragraph*{Voting Rules.}
Let~$G' = (V', A')$ be a delegation graph.
Recall that the voting power of a sink vertex/an alternative in $G'$ corresponds to the number of votes the alternative receives in~$G'$. 
Given these vote counts, we use two different voting rules to decide who wins the election: 
An alternative~$s\in\sinks(G')$ is a \emph{majority winner} in $G'$ if~$\vweight_{G'}(s) \geq\frac{\lvert V'\setminus S(G') \rvert}{2}$.\footnote{It would also be possible to consider a generalization of the majority voting rule parameterized by a threshold parameter $t$, where an alternative is a winner if it receives at least $t$ votes. This rule could for instance be used in elections consisting of multiple rounds where alternatives need a certain number of votes to advance to the next round. All our results also hold for this variant.}  
Further, an alternative~$s\in\sinks(G')$ is a \emph{plurality winner} in $G'$ if~$\vweight_{G'}(s) = \max_{s'\in\sinks(G')}\vweight_{G'}(s')$.
Observe that if we have only two alternatives, then the plurality and majority winners coincide. \smallskip

We introduce the studied computational problems in the relevant sections. 

\section{Winner Determination}\label{sec:windet}
In this section, we study \windetproblem{}, the problem concerned with determining whether an alternative can be a winner in one or all delegations in an election graph.

\medskip

\problemdef{\windetgenerictext{}}
{An \election{} graph~$G = (V, A)$ and a vertex~$s\in\sinks(G)$.}
{Is~$s$ a majority/plurality winner in at least one/in all delegation graphs~${G'\gsubset G}$?}

\medskip

In \cref{subsec:windetforall}, we consider the \allv{} variant and show that requiring that an alternative wins in all delegations is quite restrictive, leading to polynomial-time algorithms for both majority and plurality. In \cref{subsec:windetexists}, we investigate the \onev{} variants. While the corresponding problem is solvable in linear time under the majority rule, we show NP-hardness for the plurality rule, even on DAGs with only three alternatives and maximum outdegree two.

\subsection{\Allv{} Variant}\label{subsec:windetforall}
In this section, we show that the \textsc{All} variants of \textsc{Winner Determination} is polynomial-time solvable for both plurality and majority.
Our polynomial-time algorithms for \textsc{Majority/Plurality-All Winner Determination} both crucially rely on the observation that given some election graph $G$ and alternative $s$ one can easily find both the smallest and the largest possible set of agents that can vote for $s$ in some delegation subgraph.
This observation is formalized in the following lemma.

\begin{restatable}{lemma}{votesupperbounddeleg}\label{thm:votesupperbounddeleg}
	Let~$G = (V, A)$ be an \election{} graph and~$s\in\sinks(G)$ be an alternative. 
	In every delegation subgraph~$G'$ of $G$ we have
	$$\setcard{\resrevbfs_{G}(s) \setminus S(G)} \leq \vweight_{G'}(s)\leq\setcard{\revbfs_G(s) \setminus S(G)}.$$
	These bounds are tight, that is, there exist delegation subgraphs~$G'$ and~$G^*$ of~$G$ with~${\vweight_{G'}(s) = \setcard{\revbfs_{G}(s) \setminus S(G)}}$ and~${\vweight_{G^*}(s) = \setcard{\resrevbfs_{G}(s) \setminus S(G)}}$.
\end{restatable}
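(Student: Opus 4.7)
The plan is to prove the two inequalities separately and then give two explicit constructions showing tightness. The key structural fact I will use throughout is that in any delegation subgraph $G'$ of an election graph $G$, every non-sink vertex has outdegree exactly one (outdegree at most one by definition, and at least one because $S(G')=S(G)$ forces non-sinks to remain non-sinks). Combined with acyclicity, this means every agent in $G'$ reaches a unique sink, and any $v$-to-sink path in $G'$ is also a path in $G$ because $A'\subseteq A$.

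\textbf{Upper bound.} Since $A'\subseteq A$, any path in $G'$ is a path in $G$, so $\revbfs_{G'}(s)\subseteq\revbfs_G(s)$. Removing sinks from both sides gives $\vweight_{G'}(s)\leq \lvert\revbfs_G(s)\setminus S(G)\rvert$.

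\textbf{Lower bound.} Take any agent $v\in\resrevbfs_G(s)\setminus S(G)$. By the structural fact above, following out-arcs from $v$ in $G'$ yields a path to some sink $s'\in S(G')=S(G)$, and this is also a path in $G$. But by definition of $\resrevbfs_G(s)$, the only sink $v$ can reach in $G$ is $s$, so $s'=s$. Hence $\resrevbfs_G(s)\setminus S(G)\subseteq \revbfs_{G'}(s)\setminus S(G')$, which gives the lower bound.

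\textbf{Tightness.} For the upper bound, I would build $G'$ by, for each agent $v\in\revbfs_G(s)$, selecting a single out-arc towards a neighbor that lies strictly closer to $s$ (along a shortest $v$-to-$s$ path in $G$); such a neighbor exists because $v\in\revbfs_G(s)\setminus\{s\}$. For every remaining agent $v\notin\revbfs_G(s)$, since $G$ is an election graph, $v$ reaches some sink in $G$, and I pick an out-arc along a shortest $v$-to-sink path. Strictly decreasing distance on both parts guarantees outdegree one and acyclicity, and every $v\in\revbfs_G(s)$ reaches $s$ in $G'$, matching the upper bound. For the lower bound, I would build $G^*$ symmetrically: each agent in $\resrevbfs_G(s)$ picks an arbitrary out-arc (it will be forced to $s$ anyway by the lower-bound argument), while each agent $v\notin\resrevbfs_G(s)$ picks an out-arc along a shortest path in $G$ to a sink $s'\neq s$. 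Such a target $s'$ exists: either $v\notin\revbfs_G(s)$, in which case the election property gives some sink reachable from $v$, which must differ from $s$; or $v\in\revbfs_G(s)\setminus\resrevbfs_G(s)$, in which case the definition of $\resrevbfs_G(s)$ directly yields a reachable sink $s'\neq s$. One checks that the chosen out-neighbor $w$ still satisfies $w\notin\resrevbfs_G(s)$, so the construction stays within the ``away from $s$'' region and is acyclic by the distance argument.

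\textbf{Expected obstacle.} The easy direction is the inequality chain; the subtlety is in the lower-bound tightness construction, specifically verifying that one can always choose an out-arc from an agent $v\notin\resrevbfs_G(s)$ that stays outside $\resrevbfs_G(s)$ and makes progress towards a sink distinct from $s$. Handling the two cases $v\in\revbfs_G(s)\setminus\resrevbfs_G(s)$ and $v\notin\revbfs_G(s)$ uniformly via shortest-path distances to the sink set $S(G)\setminus\{s\}$ is the cleanest way I see to avoid cycles while preserving the intended voting pattern.
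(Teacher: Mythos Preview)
Your inequality arguments are correct and essentially identical to the paper's. For tightness, the paper takes a more abstract route: it partitions $V$ into $\revbfs_G(s)$ and its complement (respectively $\resrevbfs_G(s)$ and its complement), observes that each induced subgraph is again an election graph, invokes the existence of a delegation subgraph on each piece, and combines. Your explicit shortest-path construction is a genuinely different and equally valid approach; it has the advantage of being algorithmic, while the paper's version is shorter because it reuses the blanket fact that every election graph admits a delegation subgraph.

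There is, however, one small gap in your lower-bound tightness construction. For agents in $\resrevbfs_G(s)$ you say ``pick an arbitrary out-arc (it will be forced to $s$ anyway by the lower-bound argument)''. This is circular: the lower-bound argument only applies once $G^*$ is already an acyclic delegation subgraph, but an arbitrary choice of out-arcs within $\resrevbfs_G(s)$ can create a cycle if $G$ itself contains a cycle there (election graphs are not assumed to be DAGs). The fix is immediate and consistent with the rest of your construction: for each agent in $\resrevbfs_G(s)$, pick the out-arc along a shortest path to $s$ (such a path exists since $\resrevbfs_G(s)\subseteq\revbfs_G(s)$). Then distances strictly decrease on every arc of $G^*$, acyclicity follows, and the rest of your argument goes through.
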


\begin{proof}
	For the first part of the lemma, observe that by definition no agent outside~$\revbfs_{G}(s)$ has a path to~$s$ in~$G$.
	Thus only the votes of agents in~$\revbfs_{G}(s)$ can possibly vote for~$s$ and since the vertices in~$S(G)$ have no voting power themselves, the maximum number of votes for~$s$ is~$\setcard{\revbfs_{G}(s) \setminus S(G)}$.
	On the other hand, all vertices in~$\resrevbfs_{G}(s)$ have no path to any vertex in~$S(G) \setminus \{s\}$.
	Thus, in any delegation~$G'$ of~$G$ we have~$\vweight_{G'}(s) \geq \setcard{\resrevbfs_{G}(s) \setminus S(G)}$.
	
	For the second part of the lemma, consider first the graph induced by all vertices except for the ones in~$\revbfs_{G}(s)$.
	This is still an election graph and therefore has some delegation subgraph $F$.
	By definition, all vertices in~$\revbfs_{G}(s)$ have some path to~$s$ in~$G$, so in the graph induced by the vertices from $\revbfs_{G}(s)$ there is a delegation graph $F'$ where all agents from $\revbfs_{G}(s)$ delegate to $s$.
	Combining $F$ and $F'$ leads a delegation subgraph $G'$ of $G$ with  $\vweight_{G'}(s) =\setcard{\revbfs_{G}(s) \setminus S(G)}$.
	Second, consider the graph induced by all vertices except for the ones in~$\resrevbfs_{G}(s)$.
	Again, this is still an election graph and therefore has some delegation subgraph $F$.
	By definition, all vertices in~$\resrevbfs_{G}(s)$ have some path to~$s$ in~$G$ and no path to any other vertex~$s' \in S(G)\setminus \{s\}$.
	Thus, in any delegation subgraph~$G^*$ of~$G$ containing $F$, we have~${\vweight_{G^*}(s) = \setcard{\resrevbfs_{G}(s) \setminus S(G)}}$, concluding the proof.
\end{proof}

The polynomial-time algorithms now follow directly.
In case of the majority voting rule, we check whether an alternative~$s$ always wins by checking whether~$\setcard{\resrevbfs_G(s) \setminus S(G)} \geq \frac{\setcard{V \setminus S(G)}}{2}$, that is, whether the smallest set of agents voting for~$s$ contains at least half of the agents.
In case of the plurality voting rule, we can simply check whether the smallest possible set of agents voting for~$s$ is still larger than the greatest possible set for any alternative~$s'\neq s$.

\begin{restatable}{proposition}{windetallpluralityeasy}\label{thm:windetallpluralityeasy}
	\windetallmaj{} can be solved in~$\bigOh(\graphsize{G})$ time. \windetallplurality{} can be solved in~$\bigOh(\graphsize{G}\cdot\setcard{\sinks(G)})$ time.
\end{restatable}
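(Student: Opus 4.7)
The plan is to invoke \cref{thm:votesupperbounddeleg} in both cases, as previewed right before the statement. For \windetallmaj{}, the alternative~$s$ is a majority winner in every delegation subgraph iff the smallest possible value of~$\vweight_{G'}(s)$ is at least~$\setcard{V\setminus\sinks(G)}/2$. The lemma identifies this minimum as~$\setcard{\resrevbfs_G(s)\setminus\sinks(G)}$, so the algorithm computes~$\resrevbfs_G(s)$ in~$\bigOh(\graphsize{G})$ time by the two-phase reverse BFS from the preliminaries and performs a single comparison.

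For \windetallplurality{}, we will establish the characterization that~$s$ is a plurality winner in every delegation subgraph iff $\setcard{\resrevbfs_G(s)\setminus\sinks(G)} \geq \setcard{\revbfs_G(s')\setminus\sinks(G)}$ for every alternative $s'\in\sinks(G)\setminus\{s\}$. Sufficiency is immediate from the lemma: in any delegation~$G'$, $\vweight_{G'}(s)\geq\setcard{\resrevbfs_G(s)\setminus\sinks(G)}\geq\setcard{\revbfs_G(s')\setminus\sinks(G)}\geq\vweight_{G'}(s')$ for every~$s'\neq s$. For necessity, if the inequality fails for some~$s'$, we exhibit a single delegation subgraph~$G^*$ in which both~$\vweight_{G^*}(s)=\setcard{\resrevbfs_G(s)\setminus\sinks(G)}$ and~$\vweight_{G^*}(s')=\setcard{\revbfs_G(s')\setminus\sinks(G)}$ are attained simultaneously, so that~$s$ is strictly beaten by~$s'$ in~$G^*$.

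The key observation enabling this joint construction is the disjointness $\resrevbfs_G(s)\cap\revbfs_G(s')=\emptyset$, which is immediate from the definition of~$\resrevbfs_G(s)$. As a consequence, every out-arc leaving an agent of~$\resrevbfs_G(s)$ must enter either~$s$ or another vertex of~$\resrevbfs_G(s)$, so~$G[\resrevbfs_G(s)]$ is an \election{} graph with unique sink~$s$ and admits a delegation subgraph~$F'$ in which all of its agents vote for~$s$. Independently, $G[V\setminus\resrevbfs_G(s)]$ is an \election{} graph in which every vertex of~$\revbfs_G(s')$ still has a path to~$s'$ (any such path in~$G$ avoids~$\resrevbfs_G(s)$, again by the definition), so extending a reverse-BFS in-tree rooted at~$s'$ to a full delegation subgraph~$F$ forces every agent in~$\revbfs_G(s')$ to vote for~$s'$. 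The disjoint-vertex union~$F\cup F'$ is then the desired~$G^*$, and the main delicate point will be precisely verifying that this merge produces a valid delegation rather than anything computational. For the running time, one computation of~$\resrevbfs_G(s)$ plus one~$\revbfs_G(s')$ computation per alternative $s'\neq s$ yields $\bigOh(\graphsize{G}\cdot\setcard{\sinks(G)})$ overall.
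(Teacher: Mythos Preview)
Your proposal is correct and follows essentially the same approach as the paper. Both arguments reduce the plurality case to the characterization $\setcard{\resrevbfs_G(s)\setminus\sinks(G)} \geq \setcard{\revbfs_G(s')\setminus\sinks(G)}$ for all~$s'\neq s$, use \cref{thm:votesupperbounddeleg} for sufficiency, and invoke the disjointness of~$\resrevbfs_G(s)$ and~$\revbfs_G(s')$ to build a single delegation achieving both extremes for necessity; you simply spell out the latter construction in more detail than the paper does.
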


\begin{proof}
	Let~$G = (V, A)$ be the input \election{} graph and let~$s\in\sinks(G)$ be the target alternative.
	
	\textbf{Majority.}
	By \cref{thm:votesupperbounddeleg}, the lowest number of votes that~$s$ can get in any delegation subgraph of $G$ is exactly~$\setcard{\resrevbfs_G(s) \setminus S(G)}$.
	Thus, in order to solve~\windetallmaj{}, we first compute~$\resrevbfs_G(s)$ in linear time and then simply return ``YES'' if and only if~${\setcard{\resrevbfs_G(s) \setminus S(G)} \geq\frac{\setcard{V \setminus S(G)}}{2}}$.
	\medskip
	
	\textbf{Plurality.}
	First, compute the smallest possible set of voters for~$s$, that is,~$\setcard{\resrevbfs_{G}(s)\setminus S(G)}$.
	Second, check for each other alternative~$s'\in S(G)\setminus \{s\}$ whether it could get more votes than~$s$, that is, whether~${\setcard{\revbfs_{G}(s') \setminus S(G)} > \setcard{\resrevbfs_{G}(s) \setminus S(G)}}$.
	
	Concerning the running time, since the set~$\resrevbfs_G(s)$ as well as all the sets~$\revbfs_G(s')$ for each $s'\in S(G)\setminus \{s\}$ can be computed in~$\bigOh(\graphsize{G})$ time each, the algorithm runs in~${\bigOh(\graphsize{G}\cdot\setcard{\sinks(G)})}$ time overall.
	
	For the correctness, note that by \Cref{thm:votesupperbounddeleg},~$s$ gets at least~$\setcard{\resrevbfs_G(s) \setminus S(G)}$ votes.
	Similarly, each \alternative{}~$s' \in S(G) \setminus \{s\}$ gets at most~$\setcard{\revbfs_G(s') \setminus S(G)}$ votes. 
	Furthermore, since~$\resrevbfs_G(s)$ and~$\revbfs_G(s')$ are disjoint for each~${s' \in S(G) \setminus \{s\}}$, there also exists a delegation where both of those bounds are achieved at the same time.
	This implies that our algorithm correctly checks whether there is an alternative~$s'$ that can get more votes than~$s$, i.e., it solves~\windetallplurality{}.
\end{proof}

\subsection{\Onev{} Variant}\label{subsec:windetexists}
In this subsection, we handle the \onev{} variants of \windetproblem{}. 
Observe that the problem is linear-time solvable under the majority rule:
By~\cref{thm:votesupperbounddeleg}, for some election graph $G$, $\setcard{\revbfs_{G}(s) \setminus S(G)}$ is the maximum number of votes that~an alternative $s$ can get in any delegation subgraph of $G$ and that a delegation in which this bound is achieved always exists.
Thus, to check whether a given alternative is a majority winner in at least one delegation, it suffices to compute~$\revbfs_{G}(s)$ in linear time and check whether~$\setcard{\revbfs_{G}(s) \setminus S(G)} \geq \frac{\setcard{V \setminus S(G)}}{2}$.
\begin{observation}\label{thm:majoritywinnereasy}
	\windetonemaj{} is solvable in~$\bigOh(\graphsize{G})$ time.
\end{observation}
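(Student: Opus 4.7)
The plan is to exploit \cref{thm:votesupperbounddeleg} directly: it tells us that over all delegation subgraphs $G' \gsubset G$, the quantity $\vweight_{G'}(s)$ attains a maximum of exactly $\setcard{\revbfs_G(s)\setminus S(G)}$, and moreover this maximum is realized by some delegation subgraph. Since the majority threshold $\frac{\setcard{V\setminus S(G)}}{2}$ is a fixed number depending only on $G$ (not on the chosen delegation), the question ``does there exist a delegation subgraph in which $s$ is a majority winner?'' reduces to the single numerical comparison $\setcard{\revbfs_G(s)\setminus S(G)} \geq \frac{\setcard{V\setminus S(G)}}{2}$.

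First, I would compute $\revbfs_G(s)$ by a reverse breadth-first search starting from $s$ along arcs traversed in the reverse direction; this takes $\bigOh(\graphsize{G})$ time as noted in the Preliminaries. Then I would count the agents in this set by subtracting off sinks (equivalently, iterating through $\revbfs_G(s)$ and excluding elements of $\sinks(G)$), which is also linear. Finally, comparing this count against $\lceil \setcard{V\setminus S(G)}/2\rceil$ (or the appropriate threshold corresponding to the majority definition $\geq \setcard{V\setminus S(G)}/2$) gives the answer in constant time. The overall running time is thus $\bigOh(\graphsize{G})$.

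For correctness, the ``only if'' direction uses the upper-bound half of \cref{thm:votesupperbounddeleg}: if the inequality fails, then \emph{every} delegation subgraph assigns $s$ strictly fewer votes than the majority threshold, so $s$ cannot win in any delegation. The ``if'' direction uses the tightness half of \cref{thm:votesupperbounddeleg}: there exists a delegation subgraph $G'$ with $\vweight_{G'}(s) = \setcard{\revbfs_G(s)\setminus S(G)}$, which by assumption meets the majority threshold, hence $s$ wins in that delegation.

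There is essentially no technical obstacle here, since the entire combinatorial content sits inside \cref{thm:votesupperbounddeleg}; the only thing to be slightly careful about is that $\revbfs_G(s)$ may contain other sink vertices (other alternatives), which is why we subtract $S(G)$ before comparing with the threshold. Once this is observed, the observation follows immediately as a one-line computation built on top of a single reverse BFS.
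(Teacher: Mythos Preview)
Your proposal is correct and mirrors the paper's own argument essentially verbatim: invoke \cref{thm:votesupperbounddeleg} to identify $\setcard{\revbfs_G(s)\setminus S(G)}$ as the maximum (and attained) vote count for $s$, then compare it against the fixed majority threshold after a single linear-time reverse BFS. There is nothing to add.
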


We now turn to \windetoneplurality{}. 
Recall that the plurality and majority rule coincide for two alternatives. 
We show that in contrast to the majority rule, under the plurality rule deciding whether an alternative is a winner in at least one delegation is NP-hard, even for just three \alternatives{} and on DAGs with maximum outdegree two.  
Intuitively speaking, the difference in complexity between the two aggregation rules stems from the fact that if an \alternative{} has the majority of votes, it does not matter how the remaining votes are distributed between the other \alternatives{}.
On the other hand, in the case of the plurality rule, it may happen that an \alternative{} can win without getting a majority of votes, but only if the remaining votes are roughly evenly split between its opponents.
Finding the optimal way to split the remaining votes is NP-hard.
We make this explicit by introducing the following problem.

\medskip

\problemdef{Equal Power Delegation}
{An \election{} graph~$G = (V, A)$ with two sinks $s$ and $t$.}
{Is there a \delegation{} subgraph $G'$ of $G$ with~${\vweight_{G'}(s)=\vweight_{G'}(t)}$?}

\medskip

Notably, \textsc{Equal Power Delegation} is a special case of the \textsc{MinMaxWeight} problem introduced by G\"{o}lz et al.~\cite{golz2018fluid}, which asks for a delegation in which the voting power of no agent exceeds some given threshold.
The problem was proven to be NP-hard (on arbitrary election graphs with an arbitrary number of sinks) by G\"{o}lz et al.~\cite{golz2018fluid}. 
The following proposition significantly strengthens their result.

\begin{proposition}\label{pr:EqualPower}
	\textsc{Equal Power Delegation} is NP-hard even on DAGs with maximum outdegree two and depth two.
\end{proposition}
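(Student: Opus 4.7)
The plan is to reduce from \textsc{Partition}: given positive integers $a_1, \dots, a_n$ with $\sum_{i=1}^n a_i = 2B$, decide whether there is a subset $I \subseteq \{1, \dots, n\}$ with $\sum_{i \in I} a_i = B$. From such an instance I would construct an \election{} graph with exactly two sinks $s$ and $t$ as follows: for every $i \in \{1, \dots, n\}$, introduce a ``chooser'' agent $u_i$ with one arc to $s$ and one arc to $t$, and additionally introduce $a_i - 1$ ``follower'' agents, each with a single arc pointing to $u_i$ (dropping, w.l.o.g., any $a_i = 0$). A quick check confirms the structural restrictions: arcs only run follower$\to$chooser or chooser$\to$sink, so the graph is a DAG of depth exactly two; each chooser has outdegree two and each follower has outdegree one; and every agent has a path to both sinks, making the graph an \election{} graph of the required form.

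The heart of the argument is the equivalence of instances. Since a \delegation{} subgraph $G'$ must preserve $\sinks(G) = \{s, t\}$, every agent has outdegree exactly one in $G'$. The followers have no choice and keep their single arc to $u_i$, while each chooser $u_i$ retains exactly one of its two outgoing arcs, committing either to $s$ or to $t$. Defining $I := \{i : (u_i, s) \in A(G')\}$, the chooser $u_i$ together with its $a_i - 1$ followers forms a block of exactly $a_i$ non-sink vertices that all reach the sink chosen by $u_i$ (and no other sink, by the depth-two structure). Hence $\vweight_{G'}(s) = \sum_{i \in I} a_i$ and $\vweight_{G'}(t) = 2B - \sum_{i \in I} a_i$, so $\vweight_{G'}(s) = \vweight_{G'}(t)$ precisely when $\sum_{i \in I} a_i = B$, i.e., when $I$ witnesses a YES-instance of \textsc{Partition}.

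Both directions of the equivalence are then immediate: given a partition $I$, orienting $u_i$ toward $s$ iff $i \in I$ (and keeping all follower arcs) yields a delegation subgraph with $\vweight_{G'}(s) = \vweight_{G'}(t) = B$; conversely, any equal-power delegation directly reads off such a set $I$ from the orientation of the choosers. As the construction is performed in time linear in its output, this establishes NP-hardness of \textsc{Equal Power Delegation} on DAGs with maximum outdegree two and depth two.

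I do not expect any substantive obstacle here. The only real content is the observation that in a depth-two, outdegree-two DAG a ``chooser plus followers'' gadget realises an indivisible block of any desired integer weight that must be sent to one of the two sinks, and this is exactly the expressive power needed to encode \textsc{Partition}; everything else is bookkeeping about preserving sinks in the delegation subgraph.
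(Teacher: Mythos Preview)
Your reduction is structurally sound but not a polynomial-time reduction, so it does not establish NP-hardness. The constructed election graph has $\Theta(\sum_i a_i)$ agents (one chooser plus $a_i-1$ followers per item), whereas the input size of a \textsc{Partition} instance is only $\Theta\bigl(\sum_i \log a_i\bigr)$ when the numbers are given in binary. Your remark that ``the construction is performed in time linear in its output'' is true but beside the point: what is required is time polynomial in the \emph{input}. Since \textsc{Partition} is only weakly NP-hard (it is solved by the standard $O(nB)$ dynamic program), restricting to instances with $\sum_i a_i$ polynomial in the input length---which is what your construction implicitly needs---puts you in a regime where \textsc{Partition} is in~P, and the reduction proves nothing.

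There is no easy patch that preserves depth two and outdegree two: encoding the weights $a_i$ compactly would require either deeper gadgets or higher fan-out, and the natural strongly NP-hard numerical problems (\textsc{3-Partition}, numerical matching) ask for a split into many groups, not two, so they do not map directly to a two-sink instance. The paper sidesteps the issue entirely by reducing from \textsc{Vertex Cover}: vertex and edge agents have unit weight (so no unary blow-up), dummy agents attached to each vertex agent enforce that the choice of which vertex agents vote for $t$ is the \emph{only} degree of freedom that matters modulo $|W|+|E|$, and a counting argument then forces exactly $\ell$ vertex agents to go to $t$ with all edge agents following---which is possible precisely when the corresponding $\ell$ vertices form a vertex cover. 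This yields a genuinely polynomial reduction while still hitting the depth-two, outdegree-two target.
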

\begin{proof}
	We reduce from \textsc{Vertex Cover}, where given an undirected graph~$H=(W,E)$ and an integer $\ell$, the question is whether there is a subset~$W'\subseteq W$ of $\ell$ vertices such that each edge has at least one endpoint in $W'$. 
	Let~$\enc{H=(W,E), \ell}$ be an instance of \textsc{Vertex Cover}, where we assume without loss of generality that $\ell\leq \frac{\lvert W \rvert}{2}-1$ (if this is not the case we simply add sufficiently many isolated vertices). 
	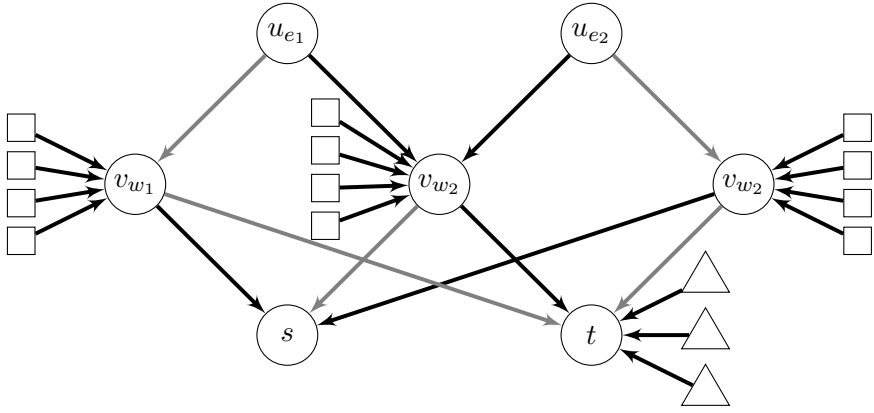
\begin{figure}[t]
		\centering
		\begin{tikzpicture}[main/.style = {draw, circle,minimum size=0.8cm,inner sep=0cm},square/.style={minimum size=0.4cm, draw,regular polygon,regular polygon sides=4},triangle/.style={minimum size=0.4cm, draw,regular polygon,regular polygon sides=3}] 
		\tikzset{edge/.style = {->,> = latex'}}
		\node[main] (a1) at (0,0) {$u_{e_1}$}; 
		\node[main] (a2) at (4,0) {$u_{e_2}$}; 
		
		\node[main] (b1) at (-2,-2) {$v_{w_1}$}; 
		\node[main] (b2) at (2,-2) {$v_{w_2}$};
		\node[main] (b3) at (6,-2) {$v_{w_2}$}; 
		
		\node[main] (d1) at (0,-4) {$s$}; 
		\node[main] (d2) at (4,-4) {$t$}; 
		
		\node[square] (c11) at (-3.5,-1.75) {};
		\node[square] (c12) at (-3.5,-2.25) {};
		\node[square] (c13) at (-3.5,-1.25) {};
		\node[square] (c14) at (-3.5,-2.75) {};
		
		\node[square] (c21) at (0.5,-1.55) {};
		\node[square] (c22) at (0.5,-2.05) {};
		\node[square] (c23) at (0.5,-1.05) {};
		\node[square] (c24) at (0.5,-2.55) {};
		
		\node[square] (c31) at (7.5,-1.75) {};
		\node[square] (c32) at (7.5,-2.25) {};
		\node[square] (c33) at (7.5,-1.25) {};
		\node[square] (c34) at (7.5,-2.75) {};
		
		\node[triangle] (f1) at (5.5,-4) {};
		\node[triangle] (f2) at (5.5,-4.75) {};
		\node[triangle] (f3) at (5.5,-3.25) {};
		
		\draw[edge,line width=1.5pt,gray] (a1) to (b1);
		\draw[edge,line width=1.5pt] (a1) to (b2);
		\draw[edge,line width=1.5pt] (a2) to (b2);
		\draw[edge,line width=1.5pt,gray] (a2) to (b3);
		\draw[edge,line width=1.5pt] (b1) to (d1);
		\draw[edge,line width=1.5pt,gray] (b2) to (d1);
		\draw[edge,line width=1.5pt] (b3) to (d1);
		\draw[edge,line width=1.5pt,gray] (b1) to (d2);
		\draw[edge,line width=1.5pt] (b2) to (d2);
		\draw[edge,line width=1.5pt,gray] (b3) to (d2);
		\draw[edge,line width=1.5pt] (c11) to (b1);
		\draw[edge,line width=1.5pt] (c12) to (b1);
		\draw[edge,line width=1.5pt] (c13) to (b1);
		\draw[edge,line width=1.5pt] (c14) to (b1);
		
		\draw[edge,line width=1.5pt] (c21) to (b2);
		\draw[edge,line width=1.5pt] (c22) to (b2);
		\draw[edge,line width=1.5pt] (c23) to (b2);
		\draw[edge,line width=1.5pt] (c24) to (b2);
		
		\draw[edge,line width=1.5pt] (c31) to (b3);
		\draw[edge,line width=1.5pt] (c32) to (b3);
		\draw[edge,line width=1.5pt] (c33) to (b3);
		\draw[edge,line width=1.5pt] (c34) to (b3);
		
		\draw[edge,line width=1.5pt] (f1) to (d2);
		\draw[edge,line width=1.5pt] (f2) to (d2);
		\draw[edge,line width=1.5pt] (f3) to (d2);
		\end{tikzpicture}
		\caption{Visualization of reduction from \Cref{pr:EqualPower} for the input graph $H=(W=\{w_1,w_2,w_3\},E=\{e_1=\{w_1,w_2\}, e_2=\{w_2,w_3\}\})$ and $\ell=1$. Dummy agents are drawn as squares and filling agents as triangles. The delegation corresponding to the vertex cover $\{w_2\}$ is drawn in black.}\label{fig:vis}
	\end{figure}
	From this, we create an instance $\enc{G=(V,A)}$ of \textsc{Equal Power Delegation} as follows (see \Cref{fig:vis} for a visualization of the construction). 
	We start by adding two sinks~$s$ and~$t$ to~$V$. 
	Moreover, for each $w\in W$, we add a \emph{vertex agent} $v_w$, and for each~$e\in E$, we add an \emph{edge agent} $u_e$. 
	Next, for each $\{w,w'\}=e\in E$, we add an arc from~$u_e$ to~$v_w$ and an arc from~$u_e$ to~$v_{w'}$. 
	In addition, for each~$w\in W$, we add an arc from~$v_w$ to~$s$ and an arc from~$v_w$ to~$t$. 
	Furthermore, for each $w\in W$, we introduce $\lvert W \rvert+\lvert E\rvert-1$ \emph{dummy agents} that only have an arc to $v_w$. 
	Lastly, we introduce $(\lvert W\rvert-2\ell)(\lvert W\rvert+\lvert E\rvert)-\lvert E\rvert\geq 0$ \emph{filling agents} that only have an arc to $t$. 
	Thus, overall, we have $2(\lvert W\rvert-\ell)(\lvert W\rvert+\lvert E\rvert)$ agents in $G$.
	Note that as all filling agents vote directly for $t$, we need to find a delegation in which exactly $\ell(\lvert W\rvert+\lvert E\rvert)+\lvert E\rvert$ of the vertex, dummy, and edge agents vote for $t$.
	We now show that~$\enc{H=(W,E), \ell}$ is a yes-instance of \textsc{Vertex Cover} if and only if  $\enc{G=(V,A)}$ is a yes-instance of \textsc{Equal Power Delegation}.
	
	$(\Rightarrow)$ Let $W'\subseteq W$ be a vertex cover of size $\ell$ in $H$.
	From this, we construct a delegation subgraph $G'$ of $G$ as follows. 
	For each $w\in W'$, vertex agent~$v_{w}$ directly votes for $t$, while for each $w\in W\setminus W'$, vertex agent~$v_{w}$ directly votes for $s$. 
	Lastly, for each~$\{w,w'\}=e\in E$, one of~$w$ and~$w'$, say~$w'$, is in~$W'$, since~$W'$ is a vertex cover. 
	We let $u_e$ delegate its vote to $v_{w'}$. 
	Thus, each edge agent delegates to a vertex agent directly voting for $t$ in~$G'$. 
	This means that in~$G'$ all~${(\lvert W \rvert-2\ell)(\lvert W \rvert+\lvert E \rvert)-\lvert E \rvert}$~filling agents, $\ell$ vertex agents together with~$\ell(\lvert W \rvert+\lvert E \rvert-1)$ dummy agents, and all~$\lvert E \rvert$~edge agents vote for $t$. 
	Thus, we have~${\vweight_{G'}(t)=(\lvert W \rvert-\ell)(\lvert W \rvert+\lvert E \rvert) =\vweight_{G'}(s)}$.
	
	$(\Leftarrow)$ Now assume that $G'$ is a delegation subgraph with ${\vweight_{G'}(t)=(\lvert W \rvert-\ell)(\lvert W \rvert+\lvert E \rvert) =\vweight_{G'}(s)}.$
	As observed above, in such a delegation exactly $\ell(\lvert W \rvert+\lvert E \rvert)+\lvert E \rvert$ of the vertex, dummy, and edge agents vote for $t$.
	Let~$W':=\{w\in W\mid v_w \text{ votes for $t$ in $G'$}\}$ be the vertices corresponding to vertex agents voting for $t$ in $G'$. 
	Recall that in all delegation subgraphs of $G$, for each vertex agent~$\lvert W \rvert+\lvert E \rvert-1$ dummy agents delegate to it and thus $\vweight_{G'}(v_w)\geq \lvert W \rvert+\lvert E \rvert$ for each $w\in W$.
	We claim that $W'$ is a vertex cover of size $\ell$ in $H$. 
	First note that it cannot hold that $\lvert W' \rvert>\ell$, as in this case at least $(\ell+1)(\lvert W \rvert+\lvert E \rvert)>\ell(\lvert W \rvert+\lvert E \rvert)+\lvert E \rvert$ vertex and dummy agents vote for $t$.
	Moreover,  it also cannot hold that $\lvert W' \rvert<\ell$, as in this case at most~$(\ell-1)(\lvert W \rvert+\lvert E \rvert)+\lvert E \rvert<\ell(\lvert W \rvert+\lvert E \rvert)+\lvert E \rvert$ vertex, dummy, and edge agents vote for $t$.
	Thus, we have $\lvert W' \rvert=\ell$, implying that $\ell(\lvert W \rvert+\lvert E \rvert)$ dummy and vertex agents vote for $t$ in $G'$. 
	Since, as observed above, $\ell(\lvert W \rvert+\lvert E \rvert)+\lvert E \rvert$ of the vertex, dummy, and edge agents vote for $t$ in $G'$, all edge agents vote for $t$ in $G'$. 
	As each edge agent $u_e$ for some $\{w,w'\}=e\in E$ either delegates to $v_w$ or $v_{w'}$ in $G'$ and only vertex agents corresponding to vertices from $W'$ vote for~$t$, this implies that each edge has at least one endpoint in $W'$. Thus, $W'$ is a vertex cover of $H$.
\end{proof}

Note that in the constructed election graph $G$ in the previous proof, we have that vertex agents have arcs to both sinks, which might be considered unrealistic from the perspective of liquid democracy. 
However, we can easily get rid of this by subdividing each arc to a sink by a newly introduced agent.

Using the hardness result above for deciding the existence of a delegation perfectly splitting the votes between two alternatives, it is straightforward to show the NP-hardness of~\windetoneplurality{}:
\begin{theorem}\label{thm:windetexpluralityhard}
	\windetoneplurality{} is NP-hard, even on DAGs with three \alternatives{} and maximum outdegree and depth two.
\end{theorem}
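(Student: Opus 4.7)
The plan is to reduce from \textsc{Equal Power Delegation}, which by \Cref{pr:EqualPower} is NP-hard already on DAGs of maximum outdegree and depth two. Given an instance $(G,s,t)$ of \textsc{Equal Power Delegation} with $n$ non-sink vertices, I would first note that if $n$ is odd the instance is trivially a no-instance and can be mapped to any fixed no-instance of \windetoneplurality{}; so assume $n$ is even. Build $G'$ from $G$ by adding one new sink $c$ together with $n/2$ new source agents, each having a single arc directly to $c$, and ask whether the target alternative $c$ is a plurality winner in at least one delegation subgraph of~$G'$.

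For correctness, I would rely on two observations about any delegation subgraph $G''$ of $G'$: (i) $c$ receives exactly $n/2$ votes, because the $n/2$ newly introduced agents always vote directly for $c$ and no other agent has a path to $c$; and (ii) every agent originally belonging to $G$ still reaches $s$ or $t$ (never $c$) in $G''$, so $\vweight_{G''}(s)+\vweight_{G''}(t)=n$. In the forward direction, extending an equal-power delegation of $G$ by letting each new agent vote directly for $c$ yields $\vweight_{G''}(s)=\vweight_{G''}(t)=\vweight_{G''}(c)=n/2$, so $c$ is a plurality winner. In the backward direction, if $c$ is a plurality winner then $\vweight_{G''}(s)\leq n/2$ and $\vweight_{G''}(t)\leq n/2$; combined with $\vweight_{G''}(s)+\vweight_{G''}(t)=n$ this forces both to equal $n/2$, so the restriction of $G''$ to the vertex set of $G$ is an equal-power delegation of $G$.

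Finally, I would verify the structural restrictions: $G'$ is still a DAG (the added agents are sources pointing to a sink), its maximum outdegree remains two (each added agent has outdegree one), its depth remains two (each added agent has a length-one path to $c$), and it has exactly three sinks $\{s,t,c\}$. The only subtle point of the argument, and the closest thing to an obstacle, is that the reduction exploits the tie-permissive definition of plurality used in the paper: $c$ must qualify as a winner even when tied with both $s$ and $t$, and the number of $c$-voting agents must be calibrated to exactly $n/2$ so that $c$ becomes a plurality winner precisely when the split between $s$ and $t$ is perfectly balanced.
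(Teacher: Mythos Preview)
Your proposal is correct and follows essentially the same approach as the paper: both reduce from \textsc{Equal Power Delegation} by adjoining a third sink together with exactly $n/2$ fresh source agents pointing to it, and argue that this new alternative can be a plurality winner if and only if the original two sinks can be balanced. Your write-up is in fact a bit more explicit about the structural restrictions and the tie-permissive definition of plurality, but the construction and the correctness argument are the same.
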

\begin{proof}
	We reduce from \textsc{Equal Power Delegation}.
	Let~$\enc{G = (V, A)}$ be an instance of \textsc{Equal Power Delegation} with sinks $s$ and~$t$ (without loss of generality we assume that $\lvert V \setminus S(G)  \rvert$ is even as otherwise we have a trivial no instance). 
	We now simply add a new sink $r$ and $\frac{\lvert V \setminus S(G)  \rvert}{2}$ dummy agents with an arc to~$r$. 
	Then, in a delegation subgraph $G'$ of $G$, $r$ will be a plurality winner if and only if~${\vweight_{G'}(s)=\vweight_{G'}(t)=\vweight_{G'}(r)=\frac{\lvert V \setminus S(G)  \rvert}{2}}$, which corresponds to~$G'$ restricted to~$G$ being a solution to the given \textsc{Equal Power Delegation} instance.
\end{proof}

\section{Election Bribery}\label{sec:elecbribe}
In this section, we handle~\electionbriberyproblem{}.
In this problem, we are given an \election{} graph, an alternative, and an integer $k$ and are asked to check whether it is possible for the given alternative to win after adding or deleting at most $k$ arcs (in all or at least one delegation).\footnote{Alternatively, one could also require that we are only allowed to change the outgoing arcs of at most $k$ agents; however, as we will argue at the end of this section, this does not change the computational complexity of our problems.}
The problem can be seen as a natural generalization of~\windetproblem{}: It asks whether some number~$k$ of modifications to a given election graph is enough to create a yes-instance of~\windetproblem{}.
In particular, for $k=0$, the two problems coincide.

\medskip

\problemdef{Majority/Plurality-One/All Election Bribery}
{An election graph~$G = (V, A)$, an alternative~$s\in\sinks(G)$, and an integer~$k \geq 0$.}
{Does there exist a set of arcs~$A^*\subseteq V\times V$ with~$\setcard{A\symmdiff A^*}\leq k$ such that~${G^* = (V, A^*)}$ is an \election{} graph with~$\sinks(G)=\sinks(G^*)$ and~$s$ is a majority/plurality winner in one/in all delegation subgraphs of $G^*$?}

\medskip

Observe that while the problem definition allows for both adding and removing arcs, it usually only makes sense to do one of the two. 
In the \onev{} variants, we want to ensure the \emph{existence} of a ``good'' delegation, so only adding arcs is useful.
The opposite is true for the \allv{} variants, with one exception: If we remove arcs from a vertex such that it is no longer connected to any sink, we have to insert an arc to connect the vertex to another alternative.
Observe that in both cases, if we insert arcs into the graph, then it is always optimal to only add arcs directly going to the given alternative.

From \Cref{thm:windetexpluralityhard} in  \Cref{subsec:windetexists}, it directly follows that \textsc{Plurality-One Election Bribery} is NP-hard for $k=0$, even on DAGs with three alternatives and with maximum outdegree and depth two.
In this section, we show that all four problems are in fact NP-hard and W[1]-hard with respect to $k$ for two alternatives even on DAGs with maximum outdegree and depth two.\footnote{Note that all variants of \textsc{Election Bribery} except \textsc{Plurality-One Election Bribery} are polynomial-time solvable for constant $k$, i.e., contained in XP: We can simply iterate over all possible sets of at most $k$ modifications, apply them, and subsequently solve the \textsc{Winner Determination} problem for the arising election graph in polynomial time.}
We start by showing hardness for the \textsc{All} variants. 

\begin{theorem}\label{thm:edgeallhard}
	\textsc{Majority/Plurality-All Election Bribery} are NP-hard and W[1]-hard parameterized by the budget $k$, even on DAGs with  two alternatives and maximum outdegree and depth two.
\end{theorem}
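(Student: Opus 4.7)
Since with only two \alternatives{} the plurality and majority winners of every delegation coincide, it suffices to prove hardness for \textsc{Majority-All Election Bribery}; the plurality variant then follows immediately. By \cref{thm:votesupperbounddeleg}, in the modified graph $G^*$ the target alternative $s$ is a majority winner in every delegation subgraph of $G^*$ if and only if at least $\lceil n/2 \rceil$ agents lie in $\resrevbfs_{G^*}(s)$, where $n$ is the number of agents; equivalently, at most $\lfloor n/2 \rfloor$ agents may still reach the other sink $t$ in $G^*$.

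Before designing the reduction, I would observe that for the \allv{} variant arc insertions are essentially useless: adding an outgoing arc at an agent $v$ can only enlarge the set of sinks $v$ reaches, so it never removes $v$ from $\revbfs(t)$. The only legitimate use of insertions is to ``repair'' an agent whose outgoing arcs have all been deleted, and such a repair can always be realized by adding a single arc directly to~$s$. Up to this constant blow-up in the budget, the problem reduces to: delete at most $k$ arcs so that the set of agents still reaching $t$ has size at most $\lfloor n/2 \rfloor$, while preserving the election-graph property that every remaining agent reaches some sink.

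I would then give a parameterized reduction from a problem that is W[1]-hard when parameterized by its solution size, e.g.\ \cliqueproblem{}. Given an instance $\enc{H=(W,E),q}$, I would construct an outdegree-two DAG of depth two with the two sinks $s$ (target) and~$t$ as follows. For each $w \in W$ I would add a ``vertex gadget'' in which cutting one designated arc to $t$ corresponds to \emph{selecting} $w$ and shifts a fixed block of votes from $t$ onto $s$. For each non-edge $\{w,w'\}\notin E$ of $H$ I would add an ``edge-check gadget'' which, if both $w$ and $w'$ are selected, re-routes a carefully tuned block of agents back to $t$ that cannot be repaired within the remaining budget. A pool of padding agents going directly to $t$ calibrates the threshold $\lfloor n/2 \rfloor$ so that reaching it with exactly $k = \Theta(q)$ deletions is possible if and only if $H$ contains a clique of size~$q$. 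Since $k$ depends only on $q$, this yields both NP-hardness and W[1]-hardness with respect to~$k$ at once.

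The main obstacle is realizing these gadgets under the two structural constraints. Depth two leaves only a single intermediate layer of delegators between direct voters and sinks, and outdegree two forbids any local fan-out inside a gadget; consequently, the gadgets cannot perform any combinatorial verification locally and can only ``communicate'' through the global count of $t$-reaching agents. The heart of the argument will therefore be a careful padding analysis that makes a deletion budget of $k$ tight on clique solutions and strictly insufficient for every other configuration of choices, so that a successful bribery is forced to correspond to a clique of the prescribed size.
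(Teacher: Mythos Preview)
Your high-level framing is sound and matches the paper: reduce from \cliqueproblem{}, exploit the equivalence of majority and plurality for two alternatives, use \cref{thm:votesupperbounddeleg} to translate ``$s$ wins in all delegations'' into the condition that at most $\lfloor n/2\rfloor$ agents still reach $t$, and observe that insertions are useless except for repairs. The problem is your gadget design.

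Your plan hinges on a \emph{non-edge} gadget that, when both endpoints are selected, ``re-routes a carefully tuned block of agents back to $t$.'' But this cannot be realized with arc deletions: the map $A^*\mapsto\revbfs_{G^*}(t)$ is monotone in $A^*$, so deleting arcs can only shrink the set of agents reaching $t$. No gadget made of deletions can \emph{increase} $t$-reachability as a penalty for a bad pair of selections. (If you instead let the gadget agent point to $v_w$ and $v_{w'}$, then selecting both endpoints \emph{removes} its last path to $t$---a bonus, not a penalty, and one that rewards independent sets rather than cliques.) So the construction as outlined does not encode \cliqueproblem{}, and no amount of padding analysis will rescue it.

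The fix---and this is exactly what the paper does---is to flip the polarity: reward edges rather than penalize non-edges. Give every vertex agent $v_w$ arcs to both $s$ and $t$, and for every edge $e=\{w,w'\}\in E$ introduce an edge agent $u_e$ with arcs to $v_w$ and $v_{w'}$ only. Then ``selecting $w$'' is deleting the arc $(v_w,t)$; this cuts $v_w$ off from $t$, and $u_e$ is cut off from $t$ precisely when \emph{both} its endpoints are selected. With $\ell$ deletions you remove at most $\ell+\binom{\ell}{2}$ agents from $\revbfs(t)$, with equality exactly when the selected vertices form a clique. Padding with filling agents that go only to $s$ then calibrates the threshold so that $k=\ell$ deletions suffice if and only if $H$ has an $\ell$-clique. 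This is a depth-two DAG with outdegree two throughout, so all the structural constraints are met.
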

\begin{proof}
	The construction in this reduction is structurally related to the one from \Cref{pr:EqualPower}. 
	We reduce from \textsc{Clique}, where given a graph $H=(W,E)$ and an integer $\ell$ the question is whether there is a set of $\ell$ pairwise adjacent vertices.  
	\textsc{Clique} is NP-hard and W[1]-hard parameterized by $\ell$. 
	Let~$\enc{H = (W, E), \ell}$ be an instance of \textsc{Clique}, where we assume without loss of generality that~$\ell+{\ell \choose 2}\leq \frac{\lvert W \rvert+\lvert E \rvert}{2}$ (if this is not the case we simply add sufficiently many isolated vertices). 
	From this, we construct an instance $\enc{G=(V,A),s,k}$ of \textsc{Majority/Plurality-All Election Bribery} as follows. 
	We first add two sinks $s$ and $t$ to $V$. 
	Moreover, we add a \emph{vertex agent} $v_w$ for every $w\in W$ and an \emph{edge agent} $u_e$ for every $e\in E$. 
	For each edge~${e=\{w,w'\}\in E}$, we add an arc from~$u_e$ to $v_{w}$ and an arc from $u_e$ to $v_{w'}$. 
	We also add for each~$w\in W$ an arc from~$v_w$ to $s$ and an arc from $v_w$ to $t$.
	Lastly, we introduce $\lvert W \rvert+\lvert E \rvert-2(\ell+{\ell \choose 2})\geq 0$ \emph{filling agents}, each having an arc to $s$.
	We set our budget $k :=\ell$ and require that $s$ is a majority/plurality winner in all delegations. 
	Note that overall we have~$2\lvert W \rvert+2\lvert E \rvert-2(\ell+{\ell \choose 2})$ agents.
	
	$(\Rightarrow)$ Assume that $W'\subseteq W$ is a clique of size $\ell=k$ in $H$. 
	Then, we claim that in the election graph $G^*$ that results from $G$ after deleting for each $w\in W'$ the arc from $v_w$ to $t$, $s$ is a {majority/plurality} winner in all delegation subgraphs of~$G^*$. 
	We prove this by arguing that sufficiently many agents can only reach $s$ and not~$t$ in $G^*$ (and thus need to vote for~$s$ in all delegation subgraphs): 
	This holds for all $\lvert W \rvert+\lvert E \rvert-2(\ell+{\ell \choose 2})$ filling agents, for the $\ell$ vertex agents $\{v_w \mid w\in W'\}$, and for the ${\ell \choose 2}$ edge agents $\{u_{e}\mid e=\{w,w'\}\in E \wedge w,w'\in W'\}$. 
	Overall, these are $\lvert W \rvert+\lvert E \rvert-(\ell+{\ell \choose 2})$ agents which is exactly half of all agents. 
	Thus, $s$ is a majority/plurality winner in all delegation subgraphs of $G^*$. 
	
	$(\Leftarrow)$ Assume that 
	$A^*\subseteq V\times V$ is a set of arcs with~$\setcard{A\symmdiff A^*}\leq k$ such that~${G^* = (V, A^*)}$ is an \election{} graph with~$\sinks(G)=\sinks(G^*)$ and~$s$ is a majority/plurality winner in all delegation subgraphs of $G^*$.
	Without loss of generality, we can assume that no arcs have been added: 
	It clearly never makes sense to add any arcs to~$t$. 
	Moreover, the only agents without an arc to $s$ are edge agents. 
	Adding an arc from $u_{e}$ for some $e=\{w,w'\}\in E$ to $s$ can only be beneficial if we also delete the arc from $u_e$ to $v_w$ and $v_{w'}$. 
	However, in this case we can also simply delete the arc from~$v_w$ to $t$ and $v_{w'}$ to $t$ instead. 
	Thus, we can assume that $G$ and~$G^*$ only differ in arcs from some vertex agents to $t$. 
	Let~$W'=\{w\in W\mid (v_w,t)\notin A^*\}$ be the set of vertices from $H$ corresponding to the modified vertex agents. 
	We claim that $W'$ is a clique of size $\ell$ in $H$. 
	Note that in~$G^*$ there need to be~$\ell+{\ell \choose 2}$~more agents~$V'\subseteq V\setminus S(G)$ than in~$G$ which can only reach~$s$ and not~$t$. 
	Moreover, $V'$ clearly can only contain vertex agents $v_w$ with $w\in W'$ and edge agents~$u_e$ for $e=\{w,w'\}$ with $w,w'\in W'$. 
	As we have that $\lvert V' \rvert\geq \ell+{\ell \choose 2}$ and $\lvert W' \rvert\leq \ell$, this implies that there need to be ${\ell \choose 2}$ edges with both endpoints in $W'$, implying that $W'$ is a clique of size $\ell$ in~$H$.
\end{proof}

We now establish a similar hardness result for the \textsc{One} variants.
\begin{theorem}\label{thm:edgeonehard}
	\textsc{Majority/Plurality-One Election Bribery} are NP-hard and W[1]-hard parameterized by the budget $k$, even on DAGs with two alternatives and maximum outdegree and depth two.
\end{theorem}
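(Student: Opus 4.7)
The plan is to reduce from \textsc{Regular Independent Set}, where given a $d$-regular graph $H=(W,E)$ and an integer $\ell$ the question is whether $H$ admits an independent set of size~$\ell$; this is W[1]-hard parameterized by $\ell$. Conceptually, the construction should resemble the one from \cref{thm:edgeallhard}, except that now ``selecting'' a vertex corresponds to \emph{adding} an arc to $s$ rather than deleting an arc to $t$.

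Concretely, I would build the following election graph~$G$ with sinks~$s$ and~$t$: for every~$w\in W$ a vertex agent~$v_w$ with a single outgoing arc to~$t$, and for every edge~$e=\{w,w'\}\in E$ an edge agent~$u_e$ with arcs to~$v_w$ and~$v_{w'}$. On top of this, I would add a number of filling agents, each with a single arc to either~$s$ or to~$t$, in two cases according to whether $\lvert W\rvert+\lvert E\rvert\geq 2\ell(1+d)$, so that bringing exactly $\ell(1+d)$ extra agents into $\revbfs_{G^*}(s)$ makes exactly half of all agents reach~$s$ in the modified graph~$G^*$. Set the budget $k\coloneqq\ell$. The resulting graph is a DAG of maximum outdegree two and depth two both before and after any permitted bribery.

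The analytical core proceeds in two steps. First, I would show that one may WLOG assume every modification is an arc addition of the form $(u,s)$: replacing a bribery $A^*$ by $A\cup A^*$ eliminates all deletions without shrinking $\revbfs_{G^*}(s)$ and without exceeding the budget, and each added arc $(u,v)$ with $v\neq s$ can be rerouted to~$s$ since either $v$ reaches $s$ in the modified graph (making the swap neutral) or it does not (making the arc useless and freely replaceable). Second, by inspecting $\revbfs_G(u)$ for each agent type, a single added arc $(u,s)$ contributes a gain of $1+d$ to $\revbfs_{G^*}(s)$ when~$u$ is a vertex agent (itself plus its $d$ incident edge agents) and at most one otherwise. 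Selecting $\ell$ vertex agents corresponding to a subset $W'\subseteq W$ therefore yields a total gain of $\ell(1+d)-\lvert E[W']\rvert$, with the correction accounting for edge agents incident to two selected vertices. Invoking \cref{thm:votesupperbounddeleg} together with the calibration of the filling agents, $s$ wins in some delegation of $G^*$ iff this total gain reaches $\ell(1+d)$, which happens iff $\lvert E[W']\rvert=0$, i.e., iff $W'$ is an independent set in $H$.

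I expect the main obstacle to be the ``direct arcs to~$s$'' reduction step, since added arcs can chain through one another before reaching~$s$; I would handle this by processing the additions in reverse topological order of the modified graph and charging each rerouted arc to its predecessor's budget slot. The two-case calibration of filling-agent counts (to keep them non-negative) and the verification that no mixed strategy involving edge or filling agents beats the $\ell(1+d)$ bound are routine calculations, as is the outdegree/depth check.
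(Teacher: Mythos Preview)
Your proposal is correct and follows essentially the same approach as the paper: both reduce from \textsc{Independent Set} on regular graphs, build the identical vertex/edge/filling-agent gadget with sinks $s$ and $t$, set $k=\ell$, and handle the sign of the filling-agent count by a two-case split. The only cosmetic difference is in the backward-direction normalization: the paper replaces any added arc $(u_e,s)$ by $(v_w,s)$ for an endpoint $w$ of $e$ and then counts, whereas you keep arbitrary $(u,s)$-additions and argue via the gain bound that non-vertex targets contribute at most one each, forcing all $\ell$ additions onto vertex agents; both arguments are equivalent.
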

\begin{proof}
	The construction again has some similarities to the construction from \Cref{pr:EqualPower}. 
	We reduce from \textsc{Independent Set}, where given an undirected graph~${H=(W,E)}$ and an integer $\ell$, the question is whether there is a subset~$W'\subseteq W$ of $\ell$ pairwise non-adjacent vertices. 
	\textsc{Independent Set} is NP-hard and W[1]-hard when parameterized by $\ell$ even on regular graphs. 
	Let $\enc{H = (W, E), \ell}$ be an instance of \textsc{Independent Set} where $H$ is $r$-regular for some~${r>0}$. 
	We assume that $\ell\leq \frac{\lvert W \rvert+\lvert E \rvert}{2(r+1)}$. 
	From this, we construct an instance $\enc{G=(V,A),s,k}$ of \textsc{Majority/Plurality-One Election Bribery} as follows. 
	We first add two sinks~$s$ and~$t$ to~$V$. 
	Subsequently, we introduce one \emph{vertex agent} $v_w$ for each $w\in W$ and one \emph{edge agent} $u_e$ for each $e\in E$. 
	We add an arc from $v_w$ to $t$ for each $w\in W$ and for each $\{w,w'\}=e\in E$ an arc from $u_e$ to $v_w$ and an arc from $u_e$ to $v_{w'}$.
	Lastly, we add~$\lvert W \rvert+\lvert E \rvert-2\ell(r+1)>0$ \emph{filling agents} which have only an arc to $s$.\footnote{If $\ell> \frac{\lvert W \rvert+\lvert E \rvert}{2(r+1)}$, then we need to slightly adjust the construction. Here instead of inserting~${\lvert W \rvert+\lvert E \rvert-2\ell(r+1)}$~\emph{filling agents} which have only an arc to $s$, we insert~${2\ell(r+1)-\lvert W \rvert-\lvert E \rvert\geq 0}$~\emph{filling agents} which have only an arc to $t$. The following proof of correctness works analogously in this case.}
	We set $k:=\ell$ and require that $s$ becomes a majority/plurality in at least one delegation subgraph of $G$. 
	In total, we have $2(\lvert W \rvert+\lvert E \rvert-\ell(r+1))$ agents in $G$. 
	
	$(\Rightarrow)$ Assume that $W'\subseteq W$ is an independent set of size $\ell$ in $H$. 
	We claim that in the election graph $G^*$ that results from $G$ after adding an arc from $v_w$ to $s$ for each~$w\in W'$, $s$ is a majority/plurality winner in at least one delegation subgraph. 
	In particular, $s$ is a winner in the delegation subgraph~$G'$ of~$G^*$ where each vertex agent $v_w$ for some $w\in W'$ votes directly for $s$ and each edge agent $u_e$ for some~$e=\{w,w'\}\in E$ with $w\in W'$ delegates to $v_w$. 
	We complete the delegation arbitrarily. 
	Then, in $G'$ all $\lvert W \rvert+\lvert E \rvert-2\ell(r+1)$ filling agents vote for $s$. 
	Moreover, all~$\ell$~vertex agents in~$\{v_w \mid w\in W'\}$ and all edge agents in~${\{u_e\mid \{w,w'\}=e\in E \wedge (w\in W' \vee w'\in W')\}}$ vote for $s$. 
	As $W'$ is an independent set of size $\ell$ in $H$ and $H$ is $r$-regular, we have
	$$\lvert \{u_e\mid \{w,w'\}=e\in E \wedge (w\in W' \vee w'\in W')\}\rvert=r\ell.$$
	Thus, in $G'$ at least $\lvert W \rvert+\lvert E \rvert-\ell(r+1)$ agents vote for $s$ and thus $s$ is a majority/plurality winner. 
	
	$(\Leftarrow)$ 
	Assume that 
	$A^*\subseteq V\times V$ is a set of arcs with~$\setcard{A\symmdiff A^*}\leq k$ such that~${G^* = (V, A^*)}$ is an \election{} graph with~$\sinks(G)=\sinks(G^*)$ and~$s$ is a majority/plurality winner in a delegation subgraph~$G'$ of~$G^*$.
	Without loss of generality, we can assume that $G^*$ and $G$ only differ in arcs from some agent to $s$. 
	Moreover, we can assume that no arc from some edge agent $u_e$ for some $\{w,w'\}=e\in E$ to $s$ has been added, as in this case we can simply add an arc from $v_w$ to $s$. 
	Let~$W'=\{w\in W \mid (v_w,s)\in A^*\}$ be the set of vertices from $H$ corresponding to the affected vertex agents. 
	Since we are only allowed to add~$k=\ell$~arcs, we clearly have $\lvert W' \rvert\leq \ell$.
	We claim that $W'$ is an independent set of size~$\ell$ in~$H$. 
	To see this, note that in $G'$ (and thus $G^*$) $\lvert W \rvert+\lvert E \rvert-\ell(r+1)$ agents need to have a path to~$s$. 
	As there are~$\lvert W \rvert+\lvert E \rvert-2\ell(r+1)$ filling agents, this implies that there need to be~$\ell(r+1)$~vertex and edge agents that have a path to $s$ in~$G^*$. 
	Note that an edge agent $u_e$ for~$\{w,w'\}=e\in E$ can only have a path to $s$ if~$w\in W'$ or $w'\in W'$ and that a vertex agent $v_w$ only has a path to $s$ if $w\in W'$. 
	Since $\lvert W' \rvert\leq \ell$ and each vertex in $H$ is incident to $r$ edges, it follows that no edge can have two endpoints in $W'$. 
	Moreover, we directly get that $\lvert W' \rvert=\ell$. 
	Thus, $W'$ is an independent set of size~$\ell$ in~$H$.
\end{proof}

\paragraph{Different cost models.}
In our definition of \textsc{Election Bribery} we ``pay'' per modified arc. 
Instead it would also be possible to pay per modified agent, i.e., pay for each agent whose outgoing arcs have been modified. 
Such different cost models are often studied in the context of bribery problems \cite{DBLP:journals/jair/BoehmerBHN21,DBLP:conf/ijcai/BoehmerBKL20,DBLP:journals/jair/FaliszewskiHHR09,faliszewski2016control}. 
Which of the two models is more suitable depends on the application. 
While paying per modified arc can model situations of indirect influence where the briber convinces an agent by discussions or advertisements, paying per modified agent models situations of direct influence where the briber ``buys'' agents and can thus completely control their behavior. 
However, for the computational complexity of our problems it does not matter which of the two cost models we use, as in both presented constructions it is never beneficial to alter more than one outgoing arc per agent. 
Thus, all our hardness results also hold for the agent cost model.

\section{Delegation Bribery}\label{sec:delegbribe}

In the previous section, we have considered manipulations of election graphs before the election.
In contrast, in this section, we focus on manipulating delegations after the election.
To this end, we start with a delegation graph and ask whether a given alternative can become a winner if we allow a given number~$k$ of agents to change their mind.

\medskip

\problemdef{\delegationbriberygeneric}
{A \delegation{} graph~$G' = (V, A')$, an alternative~$s\in S(G')$, and an integer~$k \geq 0$.}
{Does there exist a \delegation{} graph~$G^* = (V, A^*)$ such that~${\sinks(G') = \sinks(G^*)}$, $\setcard{A\symmdiff A^*}\leq 2k$, and~$s$~is a majority/plurality winner in $G^*$?}

\medskip

We bound the symmetric difference $\setcard{A\symmdiff A^*}$ by $2k$ here because changing the outgoing arc of an agent always contributes two to the symmetric difference.
Note that in the formulation of the problem we allow any agent to change the agent they delegate their vote to or to vote directly for some alternative. 
In fact, it is easy to see that it is always optimal to only alter originally directly voting agents and make them directly vote for $s$. 
Notably, our algorithms can easily be adapted to the setting where delegating agents can only switch to voting directly.

We start by considering \delegationbriberymajorityproblem.
Here, a simple greedy algorithm that bribes vertices in order of their voting power solves the problem in linear time.

\begin{proposition}\label{thm:delegbribemajeasy}
	\delegationbriberymajorityproblem{} can be solved in~$\bigOh(\setcard{V})$~time.
\end{proposition}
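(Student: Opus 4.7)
The plan is as follows. The delegation graph $G'=(V,A')$ decomposes into an in-forest: since each agent has exactly one outgoing arc and $G'$ is acyclic, $G'$ consists of one in-tree rooted at each sink, with the agents as internal vertices. For every agent $v$, $\vweight_{G'}(v)=\setcard{\revbfs_{G'}(v)}$ equals the number of vertices in the subtree rooted at $v$. ``Bribing'' an agent $v$ means redirecting its unique outgoing arc to $s$; this changes exactly two arcs in the symmetric difference $A'\symmdiff A^*$, so the budget $2k$ corresponds to bribing at most $k$ agents. For any set $B^*\subseteq V\setminus\sinks(G')$ of bribed agents, we may assume without loss that (a)~no element of $B^*$ lies in $\revbfs_{G'}(s)$, since such a bribe has no effect, and (b)~$B^*$ forms an antichain in the in-forest, since if $v\in\revbfs_{G'}(u)$ with $u\in B^*$ then bribing $v$ additionally changes no votes.

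Under these reductions, the sets $\{\revbfs_{G'}(v):v\in B^*\}$ are pairwise disjoint, so the number of extra votes $s$ receives equals exactly $\sum_{v\in B^*}\vweight_{G'}(v)$. I next argue that it is optimal to restrict bribes to the set $B$ of agents that currently vote directly for some alternative $t\neq s$. Given any antichain $B^*$ avoiding $\revbfs_{G'}(s)$, assign each $v\in B^*$ to its nearest ancestor $v^*$ in $G'$ that is a direct child of an alternative, and let $D=\{v^*:v\in B^*\}\subseteq B$, noting $\setcard{D}\leq\setcard{B^*}$. For each $u\in D$, the elements of $B^*$ mapping to $u$ form an antichain inside $\revbfs_{G'}(u)$ with pairwise disjoint subtrees, hence $\sum_{v\in B^*,\,v^*=u}\vweight_{G'}(v)\leq\vweight_{G'}(u)$. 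Summing over $u\in D$ yields $\sum_{v\in B^*}\vweight_{G'}(v)\leq\sum_{u\in D}\vweight_{G'}(u)$.

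Consequently, the maximum total gain using at most $k$ bribes equals the sum of the $k$ largest values of $\vweight_{G'}(u)$ over $u\in B$ (using all of $B$ when $\setcard{B}<k$), and the instance is a yes-instance iff this quantity plus $\vweight_{G'}(s)$ is at least $\setcard{V\setminus\sinks(G')}/2$. The algorithm therefore computes every $\vweight_{G'}(v)$ by a single bottom-up traversal of the forest in $\bigOh(\setcard{V})$ time, identifies $B$ in the same pass, extracts the $k$ largest values of $\vweight_{G'}$ over $B$ via counting sort on the weights (which are integers in $[1,\setcard{V}]$), and compares the resulting sum plus $\vweight_{G'}(s)$ against the majority threshold. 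All steps run in $\bigOh(\setcard{V})$ time.

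The main obstacle is the optimality proof: one must justify both the antichain reduction and the further reduction from arbitrary antichains to subsets of $B$. Once these two exchange arguments are in place, the greedy choice of the $k$ largest-voting-power elements of $B$ is clearly optimal and the linear-time implementation via counting sort is routine.
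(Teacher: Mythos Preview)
Your proposal is correct and follows essentially the same approach as the paper: reduce to bribing only agents that currently vote directly for some alternative other than~$s$, then greedily pick the~$k$ with largest voting power, implemented in linear time via counting/bucket sort. The paper compresses your antichain-and-mapping exchange argument into the single observation that $\vweight_{G'}(u)\leq\vweight_{G'}(v)$ whenever $(u,v)\in A'$, together with the fact that bribing one directly voting agent does not affect the voting power of another; your version makes the same reasoning explicit and slightly more rigorous, but there is no substantive difference in method.
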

\begin{proof}
	Let~$\enc{G' = (V, A'), s, k}$ be an instance of~\delegationbriberymajorityproblem{}.
	We first compute the voting power~$\vweight_{G'}(\cdot)$ of all agents and then sort the directly voting agents in decreasing order of~$\vweight_{G'}(\cdot)$. 
	Finally, we bribe the $k$ directly voting agents with the highest voting power that do not already vote for $s$ to vote for $s$.
	If~$s$ is a majority winner afterwards, then we return ``YES''.
	Otherwise, we return ``NO''.
	
	For the correctness of the algorithm, first observe that since for any arc~$(u, v)\in A'$ we have~$\vweight_{G'}(u)\leq \vweight_{G'}(v)$, it is always optimal to bribe agents directly voting for an \alternative{}. 
	Furthermore, since in a \delegation{} each vertex has a unique path to a sink, bribing a directly voting agent has no influence on the voting power of a different directly voting agent.
	Thus, our algorithm computes the delegation maximizing the number of votes for $s$.
	
	For the running time, first note that computing~$\vweight_{G'}(v)$ for all agents~${v\in V\setminus S(G')}$ can easily be done in~$\bigOh(\setcard{V})$ time with a topological sort (recall that since~$G'$ is a delegation graph, we have~$\graphsize{G'} = \bigOh(\setcard{V})$). 
	Observe that sorting according to~$\vweight_{G'}(v)$ can also be done in linear time using bucket sort.
\end{proof}

For the plurality rule, a similar approach does not always work: It can be more important to weaken an opponent than to strengthen the designated candidate, as votes taken away from strong alternatives ``count double.''
For example, imagine that we have a delegation graph with three alternatives~$r$,~$s$, and~$t$ in which an agent with voting power $6$ votes directly for~$s$, an agent with voting power~$5$ votes directly for $t$ and three agents with voting power~$4$ each directly vote for~$r$. 
Then, if we bribe the agent with voting power~$5$ to vote for $s$, then $s$ will still loose against~$r$. 
However, if we bribe one of the agents with voting power~$4$ to vote for $s$, then $s$ becomes a plurality winner. 

Nevertheless, a slightly different greedy approach can be used to solve \delegationbriberypluralityproblem{}.

\begin{theorem}\label{th:delegation-plur}
	\delegationbriberypluralityproblem{} can be solved in~$\bigOh(\setcard{V}^2)$~time.
\end{theorem}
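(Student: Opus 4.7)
My plan is to reduce \delegationbriberypluralityproblem{} to a one-dimensional sweep over a ``target'' upper bound on the opponents' final vote counts. The paragraph above the theorem already argues that it is always optimal to bribe only originally directly voting agents and to make each of them vote directly for $s$, since any directly voting ancestor of an agent $d$ has voting power at least $\vweight_{G'}(d)$. Moreover, for each opponent $c\neq s$, if the optimal solution bribes $k_c$ of the direct voters for $c$, then it is optimal to take those $k_c$ voters with the largest voting power, because this shifts the most weight away from $c$ and onto $s$. As preprocessing, I therefore compute $\vweight_{G'}(v)$ for every $v$ in $O(\setcard{V})$ time by one topological traversal; then, for each alternative $c\neq s$, I sort the set $D_c$ of direct voters for $c$ in decreasing order of voting power (bucket sort works, since voting powers are integers in $[1,\setcard{V}]$) and form the prefix sums $W_c(j)=\sum_{i=1}^{j}\vweight_{G'}(d_{c,i})$. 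Note that $W_c(\setcard{D_c})=\vweight_{G'}(c)=:f_c$.

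The core step is, for each candidate target $M$, to compute the minimum number of bribes that make every opponent end up with at most $M$ votes and simultaneously push $s$'s final vote count $\sigma$ to at least $M$. For a fixed $M$, the \emph{forced} bribes from $D_c$ must total at least $k_c^{\ast}(M):=\min\{j\mid W_c(j)\geq \max(0,f_c-M)\}$, since any fewer would leave $c$ strictly above $M$. After only these forced bribes, $\sigma$ equals $f_s+\sum_{c\neq s}W_c(k_c^{\ast}(M))$. If $\sigma\geq M$ we are done with cost $\sum_{c\neq s}k_c^{\ast}(M)$. Otherwise I greedily add \emph{extra} bribes, always picking a still-unbribed direct voter of largest voting power: each such extra raises $\sigma$ by the agent's voting power and can only push the corresponding opponent even further below $M$, so this greedy is optimal for this ``reach a threshold with the fewest items'' subproblem. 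Letting $B(M)$ denote the resulting total bribery cost, the algorithm outputs \textsc{Yes} iff $\min_{M}B(M)\leq k$.

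In any optimal bribery, the realized maximum opponent count equals $f_{c^{\ast}}-W_{c^{\ast}}(k_{c^{\ast}})$ for some pair $(c^{\ast},k_{c^{\ast}})$, so it suffices to iterate $M$ over the set $\mathcal{M}:=\{f_c-W_c(j)\mid c\neq s,\,0\leq j\leq \setcard{D_c}\}$, which has $O(\setcard{V})$ elements. Between two consecutive values of $\mathcal{M}$ each $k_c^{\ast}(\cdot)$ is constant, the forced-only value of $\sigma$ is constant, and the number of required extras is weakly increasing in $M$; hence the minimum of $B$ inside such a region is attained at its lower breakpoint, which itself lies in $\mathcal{M}$. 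For each $M\in\mathcal{M}$, the values $k_c^{\ast}(M)$ and the number of extras can be obtained in $O(\setcard{V})$ time by scanning a globally sorted list of the voting powers of all non-$s$ direct voters and marking forced ones, giving $O(\setcard{V}^{2})$ overall. The step I expect to need the most care is justifying the clean separation into ``forced plus greedy extras'': I must rule out that an optimum for a fixed $M$ could benefit from taking fewer forced bribes (ruled out because that leaves some opponent above $M$) or from substituting a heavier forced voter by a lighter combination from the same $D_c$ (ruled out by the prefix-greedy argument from the first paragraph); once these exchange arguments are in place, the correctness and the $O(\setcard{V}^{2})$ running time follow.
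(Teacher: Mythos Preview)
Your proposal is correct and follows essentially the same guess-and-verify strategy as the paper: for each candidate threshold (the paper calls it $p$, you call it $M$), first greedily bribe the heaviest direct voters of each opponent until that opponent drops to at most the threshold, then greedily bribe the heaviest remaining direct voters until $s$ reaches the threshold. The only difference is that the paper simply iterates $p$ over all integers in $\{1,\ldots,\setcard{V\setminus S(G')}\}$ rather than restricting to your breakpoint set~$\mathcal{M}$; this spares the paper your monotonicity-on-intervals justification while still giving $\bigOh(\setcard{V})$ outer iterations and hence the same $\bigOh(\setcard{V}^2)$ bound.
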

\begin{proof}
	Let~$\enc{G' = (V, A'), s, k}$ be an instance of~\delegationbriberypluralityproblem{}.
	We apply a ``guess-and-verify'' approach.
	We call a value~$p\in\NN$ \emph{achievable} if the alternative~$s$ can get at least~$p$ votes and all other alternatives~$s'\in S(G')\setminus \{s\}$ at most~$p$ votes after~$k$ bribes.
	Observe that~$\enc{G', s, k}$ is a yes-instance of~\delegationbriberypluralityproblem{} if and only if there exists an achievable~$p$.
	In order to solve the problem, we therefore iterate over all values of $p\in \{1,\dots \lvert V\setminus S(G')\rvert\}$ and then check whether it is achievable.
	
	In order to check whether a value~$p$ is achievable, we proceed in two steps.
	First, we ensure that all alternatives different from $s$ get at most~$p$ votes.
	For this, for each alternative~${s'\in S(G')\setminus \{s\}}$, we bribe agents directly voting for~$s'$ in decreasing order of~$\vweight_{G'}(\cdot)$ to vote for $s$ until the number of votes for alternative~$s'$ drops to at most~$p$.
	Second, we ensure that~$s$ gets at least~$p$ votes.
	To do so, we bribe agents still voting directly for some other alternative in decreasing order of their voting power to vote for $s$.
	After the two steps,~$s$ is clearly a plurality winner.
	Now let~$\ell$ be the total number of agents bribed in the two previous steps.
	If~$\ell\leq k$, then~$p$ is achievable.
	Otherwise,~$p$ is not achievable: 
	As bribing a directly voting agent has no influence on the voting power of other directly voting agents, $\ell$ is the smallest number of agents that need to be bribed in order to ensure that all alternatives different from~$s$ get at most~$p$ votes while $s$ gets at least $p$ votes.
	
	For the running time, observe that computing~$\vweight_{G'}(v)$ for all agents $v\in V\setminus S(G')$ can easily be done in~$\bigOh(\setcard{V})$ time with a topological sort. 
	Moreover, we can compute the ordering of directly voting agents according to~$\vweight_{G'}(\cdot)$ in linear time using bucket sort.
	Observe that the verification step takes~$\bigOh(\setcard{V})$ time.
	Since there are only~$\bigOh(\setcard{V})$ possible values for $p$, it follows that~\delegationbriberypluralityproblem{} can be solved in~$\bigOh(\setcard{V}^2)$ time.
\end{proof}

\section{Conclusion and Future Work}\label{sec:conclusions}
We have considered a liquid democracy setting where agents can specify multiple delegation options. 
We have analyzed different questions regarding winner determination and bribery in this context.
In particular, we have analyzed the problem of deciding whether an alternative wins in one/all delegation(s), whether we can change the preferences of few agents such that a given alternative wins in one/all delegation(s) and whether we can partially change a given delegation such that a different alternative wins. 
For ten computational problems canonically arising in this way, we have provided a complete picture of their computational complexity, obtaining both polynomial-time algorithms and NP-hardness results. While all our bribery problems are framed constructively, we want to remark that all of them also admit natural destructive variants. 
Notably, all our results straightforwardly extend to these destructive variants. 
However, this might not be the case for future problems to be studied.
There are multiple possible directions for future work emanating from our work. 

First, there also exist different models of liquid democracy where the type of questions studied in this paper are relevant. 
For instance, instead of simply specifying multiple acceptable delegation options, it is sometimes assumed that each agent provides a preference relation over all acceptable delegation options~\cite{brill2021liquid,DBLP:journals/mp/KavithaKMSS22}. 
Moreover, we could also assume that directly voting agents do not cast for a single candidate but provide us with a preference relation over alternatives.
These could then be aggregated using a more involved voting rule like Borda or Copeland voting. 

Second, while we have focused on decision problems, the counting variants of our problems would be of high practical interest. 
For instance, the counting variant of \textsc{Winner Determination} asks for the number of delegations where a given alternative is a winner. 
This information could be used to decide between two alternatives which both win in at least one delegation.
Similarly, it is also natural to study the counting variants of our bribery problems. 
This could be used to assess the robustness of winners to random changes \cite{DBLP:conf/eumas/BaumeisterH21,DBLP:conf/ijcai/BoehmerBFN21} and not to the ``worst-case'' changes we studied. 

Third, as different delegations are of different robustness according to our robustness measures induced by \textsc{Delegation Bribery}, it would be interesting to study the computational complexity of finding the least/most robust delegation. Furthermore, we could also search for the least/most robust delegation in which some given alternative wins to compare the robustness of these delegations as a further measure for the robustness of different winners.

Fourth, while we have focused on changing the election graph such that a given alternative wins in at least one/all delegation(s), it would also be natural to study the problem of changing the election graph such that the given alternative wins in a delegation which fulfills certain properties, e.g., which minimizes the maximum voting power of any directly voting agent or which is returned by some popular delegation-choosing algorithm. 

\subsection*{Acknowledgments}
NB was supported by the DFG project MaMu (NI 369/19) and by the DFG project ComSoc-MPMS (NI
369/22).

\bibliographystyle{splncs04}

\end{document}